\definecolor{darkgreen}{rgb}{0.0,0,0.9}
\newcommand{\urg}{\texttt{NE}}
\newcommand{\ulg}{\texttt{NW}}
\newcommand{\drg}{\texttt{SE}}
\newcommand{\dlg}{\texttt{SW}}
\newcommand{\setword}[2]{
  \phantomsection
  #1\def\@currentlabel{\unexpanded{#1}}\label{#2}
}
\numberwithin{equation}{section}
\newtheorem{theorem}{Theorem}[section]
\newtheorem{lemma}{Lemma}[section]
\newtheorem{observation}{Observation}[section]
\newtheorem{claim}[theorem]{Claim}
\title{On Guarding Orthogonal Polygons with Bounded Treewidth}
\author{Therese Biedl and Saeed Mehrabi\\

{\small Cheriton School of Computer Science,

University of Waterloo, Waterloo, Canada.

{\tt \{biedl,smehrabi\}@uwaterloo.ca}}
}
\date{}
\begin{document}

\maketitle

\begin{abstract}
There exist many variants of guarding an orthogonal polygon in an orthogonal fashion: sometimes a guard can see an entire rectangle, or along a staircase, or along a orthogonal path with at most $k$ bends.  In this paper, we study all these guarding models in the special case of orthogonal polygons that have bounded treewidth in some sense.  Exploiting algorithms for graphs of bounded treewidth, we show that the problem of finding the minimum number of guards in these models becomes linear-time solvable in polygons of bounded treewidth.
\end{abstract}

\section{Introduction}
\label{sec:introduction}
In this paper, we study orthogonal variants of the well-known art gallery problem. In the standard art gallery problem, we are given a polygon $P$ and we want to guard $P$ with the minimum number of point guards, where a guard $g$ sees a point $p$ if the line segment $\overline{gp}$ lies entirely inside $P$. This problem was introduced by Klee in 1973~\cite{orourke1987} and has received much attention since.  $\lfloor n/3 \rfloor$ guards are always sufficient and sometimes necessary \cite{chvatal1975}, minimizing the number of guards is \textsc{NP}-hard on arbitrary polygons~\cite{lee1986}, orthogonal polygons~\cite{dietmar1995}, and even on simple monotone polygons~\cite{krohn2013}. The problem is \textsc{APX}-hard on simple polygons~\cite{eidenbenz2001} and several approximation algorithms have been developed \cite{ghosh2010,krohn2013}.

Since the problem is hard, attention has focused on restricting the type of guards, their visibility or the shape of the polygon. In this paper, we consider several models of ``orthogonal visibility'',  and study orthogonal polygons that have bounded treewidth in some sense. Treewidth (defined in Section~\ref{sec:treewidth}) is normally a parameter of a graph, but we can define it for a polygon $P$ as follows. Obtain the {\em standard pixelation} of $P$ by extending a horizontal and a vertical ray inward at every reflex vertex until it hits the boundary of $P$ (see also Figure~\ref{fig:pixelAndRefinement}). We can interpret this subdivision into rectangles as a planar straight-line graph by placing a vertex at any place incident to at least two segments, and define the treewidth of a polygon $P$ to be the treewidth of the graph of the standard pixelation.

\begin{figure}[t]
\centering
\includegraphics[width=0.50\linewidth,trim=0 0 250 0,clip]{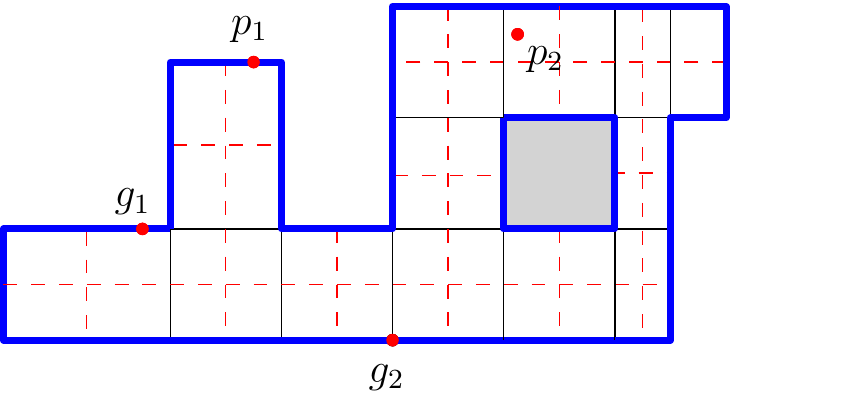}
\caption{A polygon $P$ with its standard pixelation (black, solid) and its 1-refinement (red, dashed). The gray area indicates a hole.}
\label{fig:pixelAndRefinement}
\end{figure}

\paragraph{Motivation.}
One previously studied special case of the art gallery problem concerns {\em thin} polygons, defined to be orthogonal polygons for which every vertex of the standard pixelation lies on the boundary of the polygon. Thus a polygon is simple and thin if and only if the standard pixelation is an outer-planar graph. Tom{\'{a}}s~\cite{tomas2013} showed that the (non-orthogonal) art gallery problem is \textsc{NP}-hard even for simple thin polygons if guards must be at vertices of the polygon. Naturally, one wonders whether this NP-hardness can be transferred to orthogonal guarding models. This is not true, for example $r$-guarding (defined below) is polynomial on polygons whose standard pixelation is outer-planar, because it is polynomial on {\em any} simple polygon \cite{worman2007}. But, what can be said about polygons that are ``close'' to being thin? Since outer-planar graphs have treewidth 2, this motivates the question of polygons where the standard pixelation has bounded treewidth.

The goal of this paper is to solve orthogonal guarding problem for polygons of bounded treewidth.    There are many variants of what ``orthogonal guarding'' might mean; we list below the ones considered in this paper:
\begin{itemize}
\vspace*{-2mm}
\itemsep -2pt
\item
\emph{Rectangular-guarding ($r$-guarding).} A point guard $g$ {\em $r$-guards} a point $p$ if the minimum axis-aligned rectangle containing $g$ and $p$ is a subset of $P$. This model was introduced in 1986 by Keil~\cite{keil1986} who gave an $O(n^2)$-time algorithm for horizontally convex orthogonal polygons. $r$-guarding is NP-hard in orthogonal polygons with holes \cite{Biedl016}, but is solvable in $O(n^3)$ time in simple orthogonal polygons \cite{worman2007}. There are also linear-time approximation algorithms \cite{Lingas2012} as well
as algorithms for special cases~\cite{CulbersonReckhow1989,keil1986,PaliosTzimas2014}.

\item
\emph{Staircase-guarding ($s$-guarding).} A point guard $g$  {\em $s$-guards} any point $p$ that can be reached from $g$ by a {\em staircase}, i.e., an orthogonal path inside $P$ that is both $x$-monotone and $y$-monotone. This was introduced by Motwani et al.~\cite{MotwaniRS89} who proved that $s$-guarding is polynomial on simple orthogonal polygons. See also~\cite{MotwaniRS90}.

\item
\emph{Periscope-guarding.} A \emph{periscope guard} $g$ can see all points $p$ in which some orthogonal path inside $P$ connects $g$ to $p$ and has at most one bend. This was introduced by Gewali and Ntafos~\cite{GewaliN92}, who showed NP-hardness in 3D and gave positive results for special types of grids (i.e., sets of orthogonal line segments).

A natural generalization of periscope-guards are {\em $k$-periscope guards} in which a point guard $g$ can see all points $p$ that are connected via an orthogonal path inside $P$ with at most $k$ bends.  (In contrast to $s$-guards, monotonicity of the path is not required.) This was also studied by Gewali and Ntafos~\cite{GewaliN92}, and is an orthogonalized version of {\em $k$-visibility} guards where a guard can see along a path (not necessarily orthogonal) with up to $k$ segments \cite{Shermer91}.

Another variant is to consider length rather than number of bends. Thus, an {\em $L_1$-distance guard} $g$ (for some fixed distance-bound $D$) can see all points $p$ for which some orthogonal path from $g$ to $p$ inside $P$ has length at most $D$.\footnote{We use ``$L_1$'' to emphasize that this path {\em must} be orthogonal; the concept would make sense for non-orthogonal paths but we do not have any results for them.} We are not aware of previous results for this type of guard.

\item
\emph{Sliding cameras.} Recently there has been much interest in {\em mobile guards}, where a guard can walk along a line segment inside polygon $P$, and can see all points that it can see from some point along the line segment.  In an orthogonal setting, this type of guards becomes a {\em sliding camera}, i.e., an axis-aligned line segment $s$ inside $P$ that can see a point $p$ if the perpendicular from $p$ onto $s$ lies inside $P$. The sliding cameras model of visibility was introduced in 2011 by Katz and Morgenstern~\cite{katz2011}.  It is NP-hard in polygons with holes \cite{durocherM2013} (see also~\cite{SaeedThesis}); its complexity in simple polygons is open.
\end{itemize}

\paragraph{Related results.} We showed in an earlier paper that $r$-guarding a polygon with bounded treewidth can be solved in linear time \cite{Biedl016}. We briefly sketch here how this worked, so that we can explain why it does not transfer to $s$-guarding. The main idea was to express $r$-guarding as a restricted-distance-2 dominating set problem in a suitably defined auxiliary graph. This graph has vertices for all possible guards, all points that need watching, and as ``intermediaries'' all maximal axis-aligned rectangles inside $P$, with point $p$ adjacent to rectangle $R$ if and only if $p\in R$.  The crucial argument is that with this choice of intermediary any point of $P$ belongs to $O(f(t))$ intermediaries, where $t$ is the treewidth of the polygon and $f(\cdot)$ is a suitable function.  Therefore, one can argue that the auxiliary graph has bounded treewidth if the polygon does, and so restricted-distance-2 dominating set can be solved. A similar (and even simpler) approach works for sliding cameras \cite{BiedlCLMMV17}; here the ``intermediaries'' are maximal orthogonal line segments.

\paragraph{Our results.} This main goal of this paper is to solve the $s$-guarding problem in polygons of bounded treewidth. We first attempted an approach similar to the one used for $r$-guarding, i.e., to find suitably intermediaries and use restricted-distance-2 dominating set.  We were unsuccessful, and suspect that due to the arbitrary number of bends in staircases, no intermediaries can exist for which any point belongs to $O(f(t))$ intermediaries. Instead, we develop an entirely different approach. Note that the above guarding-models (except $r$-guarding) are defined as ``there exists an orthogonal path from $g$ to $p$ that satisfies some property''.  One can argue (see Lemma~\ref{lem:pathOnPixelation}) that
we may assume the path to run along edges of the pixelation. The guarding problem then becomes the problem of reachability in a directed graph derived from the pixelation.  This problem is polynomial in graphs of bounded treewidth,  and we hence can solve the guarding problem for $s$-guards, $k$-periscope-guards, sliding cameras, and a special case of $L_1$-distance-guards, presuming the polygon has bounded treewidth.

One crucial ingredient (similarly used in \cite{BiedlCLMMV17,Biedl016}) is that we can usually reduce the (infinite) set of possible guards to a finite set of ``candidate guards'', and the (infinite) set of points that need to be guarded to a finite set of ``watch points'' while maintaining an equivalent problem.  This is not trivial (and in fact, false for some guarding-types), and may be of independent interest since it does not require the polygon to have bounded treewidth. We discuss this in Section~\ref{sec:prelimins}.

To explain the construction for $s$-guarding, we first solve (in Section~\ref{subsec:urg}) a subproblem in which an $s$-guard can only see along a staircase in north-eastern direction. We then combine four of the obtained constructions to solve $s$-guarding (Section~\ref{subsec:sGuarding}). In Section~\ref{sec:otherTypes}, we modify the construction to solve several other orthogonal guarding variants. We conclude in Section~\ref{sec:conclusion}.

\section{Preliminaries}
\label{sec:prelimins}
Throughout the paper, let $P$ denote an orthogonal polygon (possibly with holes) with $n$ vertices. We already defined $\alpha$-guards (for $\alpha= r,s,$ periscope, etc.).
The {\em $\alpha$-guarding problem} consists of finding the minimum set of $\alpha$-guards
that can see all points in $P$.  We solve a more general problem that allows to restrict 
the set of guards and points to be guarded.
Thus, the {\em $(\Gamma,X)$-$\alpha$-guarding problem}, for 
some (possibly infinite) sets $\Gamma\subseteq P$ and $X\subseteq P$,  
consists of finding a minimum
subset $S$ of $\Gamma$ such that all points in $X$ are $\alpha$-guarded
by some point in $S$, or reporting that no such set exists.  
Note that with this, we can for example restrict guards to be only at
polygon-vertices or at the polygon-boundary, if so desired.
The standard $\alpha$-guarding problem is the same as the $(P,P)$-$\alpha$-guarding problem.

Recall that the {\em standard pixelation} of $P$ is obtained by extending a horizontal and a vertical ray inward from any reflex vertex until they hit the boundary.
This is one method of obtaining a {\em pixelation} of $P$, i.e., a partition of the polygon into axis-aligned rectangles ({\em pixels}) such that any pixel-corner is either on the
boundary of $P$ or incident to four pixels. The \emph{1-refinement} of a pixelation is the result
of partitioning every pixel into four equal-sized rectangles. See Figure~\ref{fig:pixelAndRefinement}. 

A pixelation can be seen as planar straight-line graph, with vertices at pixel-corners and edges along pixel-sides.
For ease of notation we do not distinguish between the geometric construct (pixel/pixel-corner/pixel-side) and its equivalent in the graph (face/vertex/edge).
To solve guarding problems, it usually suffices to study this graph due to the following:
\begin{lemma}
\label{lem:pathOnPixelation}
Let $P$ be a polygon with a pixelation $\Psi$. Let $\pi$ be an orthogonal path inside $P$ that connects two vertices $g,p$ of $\Psi$. Then there exists a path $\pi'$ from $g$ to $p$ 
along edges of $\Psi$ that satisfies
\begin{itemize}
\item $\pi'$ is monotone if $\pi$ was,
\item $\pi'$ has no more bends than $\pi$,
\item $\pi'$ is no longer than $\pi$.
\end{itemize}
\end{lemma}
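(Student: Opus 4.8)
The plan is to prove this by a local-modification (pushing/snapping) argument: I would take the arbitrary orthogonal path $\pi$ and show how to transform it step by step into a path that runs along edges of the pixelation $\Psi$, while never increasing the number of bends, the length, or destroying monotonicity. The key observation is that an orthogonal path $\pi$ consists of alternating horizontal and vertical segments, and the pixelation $\Psi$ partitions $P$ into rectangles whose sides lie on a set of horizontal and vertical grid-lines (the lines supporting pixel-sides). So the goal is to move every segment of $\pi$ onto one of these grid-lines.

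First I would set up the main tool: consider a maximal horizontal segment $s$ of $\pi$ that does not lie on a horizontal grid-line of $\Psi$. I claim we can slide $s$ vertically (either up or down) to an adjacent horizontal grid-line without leaving $P$ and without creating new bends. The reason is that the open horizontal strip between $s$ and the nearest grid-lines above and below it contains no vertices of $\Psi$ in its interior, hence (because pixel-corners are the only places where the structure changes) sliding $s$ within this strip keeps it inside the same sequence of pixels and therefore inside $P$. Sliding $s$ upward shortens the vertical segment of $\pi$ on the side where $\pi$ came from below and lengthens it on the other side by the same amount, so the total length is unchanged by the slide itself; and one of the two slide directions does not increase length while the adjacent vertical segments are simply extended or shortened. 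The endpoints $g,p$ are vertices of $\Psi$ and hence already lie on grid-lines, so the first and last segments of $\pi$ are handled consistently. After snapping every horizontal segment to a horizontal grid-line, I would repeat the symmetric argument for the vertical segments, snapping each to a vertical grid-line; because horizontal snapping only moved segments vertically, it did not disturb which vertical grid-lines the vertical segments sit on, so the two phases do not interfere.

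For the three properties I would argue as follows. The number of bends never increases because a slide only translates a segment and, in the boundary cases where a segment slides flush against a collinear neighbor, two bends can merge into one (never split). Length does not increase because for each maximal off-grid segment at least one of the two snap directions is non-lengthening, and when a segment degenerates to zero length the two flanking parallel segments merge, strictly shortening the path. Monotonicity is preserved because sliding a horizontal segment up or down does not reverse the left-right direction of any horizontal segment nor the up-down direction of any vertical segment; an $x$-monotone path stays $x$-monotone and likewise for $y$. Finally, since the vertices of $P$ (in particular reflex vertices, whose rays generate the grid) are included among the grid-lines, the snapped path stays inside $P$ throughout.

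The main obstacle I expect is making the sliding argument clean near the boundary of $P$ and at holes: when an off-grid segment is pushed toward a grid-line, I must be sure the whole segment, not just its interior, stays within $P$, and that the flanking vertical segments remain inside $P$ as they are shortened or lengthened. This requires the definition of pixelation---that a pixel-corner is either on the boundary of $P$ or incident to four pixels---so that the open strip swept by the slide is a union of full pixel-interiors and contains no piece of $\partial P$; I would lean on exactly this property to guarantee the swept region lies in $P$. A secondary subtlety is choosing the slide direction consistently so that length is globally non-increasing rather than merely locally unchanged; handling this by always choosing the non-lengthening direction (which exists because the two directions change length by opposite amounts) resolves it.
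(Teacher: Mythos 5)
There is a genuine gap at the center of your argument: you conflate ``lying on a grid-line'' (a line that supports some pixel-side \emph{somewhere} in $P$) with the lemma's actual conclusion, ``running along edges of $\Psi$''. Pixel-sides lie at grid coordinates, but the converse fails: a global grid-line can pass through the \emph{interior} of pixels far away from where it supports any edge, because pixel-sides (e.g., the rays of the standard pixelation, which stop as soon as they hit the boundary) do not extend along the whole line. Concretely, take a polygon made of two tall towers joined at the bottom; a reflex vertex at height $c$ in the left tower puts $y=c$ among your horizontal grid-lines, while the right tower may contain a single pixel spanning $c-1\le y\le c+1$. A horizontal segment of $\pi$ in the right tower at height $c-0.3$, flanked by two vertical segments going upward (so that your non-lengthening slide direction is up), snaps under your rule to the nearest grid-line $y=c$ --- and lands strictly inside that pixel, on no edge of $\Psi$. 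So your procedure terminates with a path on grid-lines but not along pixelation edges, and the two-phase scheme (snap all horizontals, then all verticals) cannot be repaired separably: having grid $y$- and grid $x$-coordinates simultaneously still does not place a segment on pixel-sides. Your strip argument correctly shows that a \emph{one-strip} slide stays inside $P$, but one strip does not suffice, and once a segment must slide across several strips, ``no vertices in the open strip'' no longer controls containment (the slide can be blocked by the boundary under only part of the segment), and your direction rule carries no guarantee of ever reaching an on-edge position.

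The paper closes exactly this hole with an anchored, sequential slide instead of independent per-orientation snapping. It processes the segments $s_1,\dots,s_\ell$ in order from $g$ and takes $i$ minimal with $s_i$ off the edges; by this minimality, $s_{i-1}$ runs along pixel-edges and its far endpoint is a pixel-corner. It then slides $s_i$ only in the direction that \emph{shortens $s_{i-1}$}, and continues \emph{until $s_i$ itself resides on pixel-edges} --- which must happen at the latest when $s_{i-1}$ shrinks to nothing, since at that moment $s_i$ emanates from a pixel-corner and hence runs along pixel-edges. This anchor simultaneously supplies termination, an on-edge landing position, and containment in $P$, together with the same bookkeeping you carry out: any growth of $s_{i+1}$ is matched by shrinkage of $s_{i-1}$ (so length does not increase), no bends are created and some may vanish, and no segment reverses direction (so monotonicity is preserved). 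Your per-slide accounting of bends, length, and monotonicity is fine and essentially matches the paper's; what is missing is the mechanism --- the predecessor-shortening anchor --- that forces the slide to end on actual edges of $\Psi$ rather than merely on a grid coordinate.
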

\begin{proof}
Let $s_1,\dots,s_\ell$ be the segments of $\pi$, in order from $g$ to $p$. Let $i$ be minimal such that $s_i$ does not run along pixel-edges.
Observe that $1<i<\ell$ since $g$ and $p$ are pixel-corners and so their
incident segments are on pixel-edges. Shift $s_i$ in parallel by shortening $s_{i-1}$ until the shifted segment
resides on a pixel-edge.  (This must happen at the latest when
$s_{i-1}$ has shrunk to nothing, since the other end of $s_{i-1}$ is 
a pixel-corner by choice of $i$.)  We shorten or lengthen $s_{i+1}$ as
needed to maintain a path.  Throughout the shift, segment $s_i$ remains
within the same pixels by choice of shift, and so it remains within $P$.
One easily verifies that the path obtained after the shift satisfies the
conditions: we do not change the directions of segments, we do not add bends (and perhaps
remove some if $s_{i-1}$ shrinks to nothing), and we do not add length
(any increase of $|s_{i+1}|$ corresponds to a decrease of $|s_{i-1}|$). 
After the shift $s_i$ resides on pixel-edges, and sufficient repetition
gives the result.
\end{proof}

\subsection{Tree decompositions}
\label{sec:treewidth}
A \emph{tree decomposition} of a graph $G$ is a tree $I$ and an assignment ${\cal X}:I\rightarrow 2^{V(G)}$ of \emph{bags} to the nodes of $I$ such that (i) for any vertex $v$ of $G$, the bags containing $v$ form a connected subtree of $I$ and (ii) for any edge $(v,w)$ of $G$, some bag contains both $v$ and $w$. The width of such a decomposition is $\max_{X\in {\cal X}} |X|-1$, and the \emph{treewidth} $tw(G)$ of $G$ is the minimum width over all tree decompositions of $G$. 

We aim to prove results for polygons where the standard pixelation has bounded treewidth. Because we sometimes use the 1-refinement of $P$ instead, we need:
\begin{observation}
\label{obs:boundedTreewidthOfRefinement}
Let $P$ be a polygon with a pixelation $\Psi$ of treewidth $t$.
Then the 1-refinement of $\Psi$ has treewidth $O(t)$.
\end{observation}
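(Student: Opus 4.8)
The plan is to build a tree decomposition of the 1-refinement $\Psi'$ directly from a given tree decomposition $(I,\mathcal{X})$ of $\Psi$ of width $t$, adding each newly created vertex to the bags it is ``close to'' without ever enlarging a bag by more than a constant factor. First I would fix notation for $\Psi'$: besides the original vertices (pixel-corners) of $\Psi$, it has one \emph{midpoint} $m_e$ for every edge $e$ of $\Psi$ (subdividing $e$), and one \emph{center} $c_F$ for every pixel $F$. In $\Psi'$ each corner is adjacent only to midpoints, each midpoint $m_e$ is adjacent to the two endpoints of $e$ and to the center(s) of the pixel(s) bounded by $e$, and each center $c_F$ is adjacent precisely to the four midpoints of the sides of $F$. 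In particular there are no corner--corner, midpoint--midpoint, corner--center, or center--center edges; this is what makes the neighbourhoods local enough to handle.

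The construction keeps the tree $I$ and every original bag, and augments each bag $X_i$ by (a) the midpoint $m_e$ of every edge $e$ having an endpoint in $X_i$, and (b) the center $c_F$ of every pixel $F$ having a corner in $X_i$. The size bound is where I would use the geometry: since all edges of a pixelation are axis-parallel, every vertex has degree at most $4$, and therefore lies on at most $4$ edges and at most $4$ pixels. Hence $X_i$ gains at most $4|X_i|$ midpoints and at most $4|X_i|$ centers, so its size grows from at most $t+1$ to at most $9(t+1)$, giving width $O(t)$.

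It remains to verify the two tree-decomposition axioms, and here lies the only real subtlety. Edge coverage is routine: each subdivided half-edge $(u,m_e)$ lies in any bag containing the endpoint $u$, and each center--midpoint edge $(c_F,m_e)$ lies in any bag containing an endpoint of $e$, since that endpoint is simultaneously a corner of $F$ (so $c_F$ is added) and an endpoint of $e$ (so $m_e$ is added). For the connectivity axiom, the corners keep their original connected subtrees, and a midpoint $m_e$ occupies $T_u\cup T_v$ for the endpoints $u,v$ of $e$, which is connected because the edge $(u,v)$ forces $T_u\cap T_v\neq\emptyset$. The step I would be most careful about is the subtree of a center $c_F$: it occupies $T_a\cup T_b\cup T_c\cup T_d$ for the four corners $a,b,c,d$ of $F$, and these four subtrees need \emph{not} share a common node of $I$. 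The resolution is that $a,b,c,d$ form a $4$-cycle in $\Psi$, so consecutive corners co-occur in some bag; thus $T_a\cap T_b$, $T_b\cap T_c$, and $T_c\cap T_d$ are all nonempty, and a chain of consecutively intersecting subtrees has connected union. This is exactly why the tempting worry --- that we would need all four corners of every pixel inside a single bag, which a width-$t$ decomposition of a $4$-cycle need not provide --- never actually arises: a center only needs its own bags to be connected, not to form a common bag with all of its neighbours.
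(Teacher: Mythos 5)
Your construction is exactly the paper's: the paper keeps the same tree $I$ and augments each bag by, for every vertex $v$ in it, the (up to) 8 refinement-vertices sharing a quarter-pixel with $v$ --- precisely your four edge-midpoints and four pixel-centers per vertex --- yielding the same $O(t)$ width bound. Your proposal is correct and simply makes explicit the verification the paper dismisses with ``one can easily verify,'' including the one genuinely non-obvious point (connectivity of a center's subtree via the chain of pairwise-intersecting corner subtrees around the pixel's $4$-cycle).
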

\begin{proof}
Let $\mathcal{T}=(I, \mathcal{X})$ denote a tree decomposition of width $t$ of $\Psi$.
Let $X$ be a bag in $\mathcal{X}$, which contains vertices of the standard pixelation. Obtain a bag $X'$ by adding to it, for every $v\in X$,
the (up to) 8 vertices $v'$ of the 1-refinement such that some pixel of the
1-refinement contains both $v$ and $v'$. Let $\mathcal{T'}=(I, \mathcal{X'})$ be the tree decomposition consisting of
these bags $X'$, with the same adjacency structure $I$ as for $\mathcal{T}$.
Clearly, this tree decomposition has width $O(t)$, and one can easily verify
that it indeed is a tree decomposition of the 1-refinement.
\end{proof}

The standard pixelation of an $n$-vertex polygon may well have $\Omega(n^2)$ vertices in general, but not for polygons of bounded treewidth.

\begin{lemma}
Let $P$ be a polygon with $n$ vertices and treewidth $t$. 
Then the standard pixelation $\Psi$ of $P$ has $O(3^t n)$ vertices.
\end{lemma}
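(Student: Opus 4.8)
The plan is to classify the vertices of the standard pixelation $\Psi$ by their local structure and thereby reduce the whole count to the number of \emph{crossing} vertices. Every vertex of $\Psi$ is one of: (i) a vertex of $P$; (ii) a point where an inward ray first meets the boundary of $P$ (a degree-$3$ ``$T$-junction''); or (iii) a point where a horizontal ray transversally crosses a vertical ray (a degree-$4$ ``crossing''). Since $P$ has $n$ vertices it has $O(n)$ reflex vertices, each emitting at most two rays, so there are $O(n)$ rays in total; each ray contributes exactly one far endpoint on the boundary, giving $O(n)$ vertices of types (i) and (ii). It therefore suffices to bound the number of crossings by $O(3^t n)$.

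Next I would isolate the single place where treewidth enters, namely that a grid of crossings cannot be large. Suppose $h_1,\dots,h_a$ are horizontal rays and $v_1,\dots,v_a$ are vertical rays such that every $h_i$ crosses every $v_j$. Taking the subgraph of $\Psi$ consisting only of these $2a$ rays, the $a^2$ crossings together with the ray-portions between consecutive crossings form a subdivision of the $a\times a$ grid; contracting the subdivided edges exhibits the $a\times a$ grid as a minor of $\Psi$. As treewidth is minor-monotone and the $a\times a$ grid has treewidth $a$, we get $a\le t$. Encoding crossings by a bipartite ``crossing graph'' $B$ between the horizontal and vertical rays (an edge whenever they intersect), this says precisely that $B$ contains no $K_{t+1,t+1}$.

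Finally I would bound the number of edges of $B$, and this is where I expect the main obstacle to lie. The abstract $K_{t+1,t+1}$-freeness alone, via the K\H{o}v\'ari--S\'os--Tur\'an theorem, yields only $O(n^{2-1/(t+1)})$ crossings, which is superlinear in $n$; to obtain a bound linear in $n$ one must exploit the geometry, namely that $h_i$ crosses $v_j$ exactly when the two interval-containments $x_{v_j}\in X_{h_i}$ and $y_{h_i}\in Y_{v_j}$ both hold, where $X_{h_i}$ and $Y_{v_j}$ denote the $x$- and $y$-extents of the two rays. The plan is to decompose recursively on the $x$-coordinates of the vertical rays (e.g.\ via an interval tree), so that the $x$-extent of each horizontal ray splits into canonical ranges of vertical rays; within one such range only the second containment $y_{h_i}\in Y_{v_j}$ remains, turning the local count into a point-in-interval incidence problem that still inherits the no-$K_{t+1,t+1}$ property and hence contributes few incidences. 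Summing the local bounds over all ranges and all rays then yields the global bound, with the dependence on $n$ kept linear and the exponential-in-$t$ factor $3^t$ absorbed into the recursive charging (each separating ray may play the role of left end, right end, or interior of a crossed range). Converting the forbidden grid into this linear count while honestly tracking the $t$-dependence is the crux of the argument.
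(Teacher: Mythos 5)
Your classification of the pixelation vertices and your grid-minor step are both sound: types (i) and (ii) contribute $O(n)$ vertices, and if $t+1$ horizontal rays each crossed $t+1$ vertical rays, the union of these rays would contain a subdivision of the $(t+1)\times(t+1)$ grid, contradicting treewidth $t$ by minor-monotonicity. The genuine gap is that the proof stops exactly where it must deliver the bound. From the hypothesis ``treewidth at most $t$'' you retain only the consequence ``the bipartite crossing graph is $K_{t+1,t+1}$-free,'' and you correctly concede that this alone gives only the superlinear K\H{o}v\'ari--S\'os--Tur\'an bound $O(n^{2-1/(t+1)})$. The proposed repair --- a segment-tree decomposition into canonical ranges reducing each node to point-in-interval incidences --- is only a sketch, and even if executed it would not prove the lemma: each horizontal ray splits into $\Theta(\log n)$ canonical ranges, so even with a linear $K_{k,k}$-free point-interval bound at each node, summing over nodes yields something like $O(t\,n\log n)$, not $O(3^t n)$. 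The remark that the factor $3^t$ is ``absorbed into the recursive charging'' is an assertion, not an argument. Worse, the abstraction itself is likely too lossy to ever work: bounded treewidth excludes \emph{all} large grid minors, including ones arising from crossing patterns that are far from complete bipartite (e.g.\ subdivided grids), whereas your reduction keeps only the complete-bipartite exclusion. It is unclear --- and I would expect false --- that $K_{t+1,t+1}$-freeness alone forces a number of crossings that is linear in the number of segments, so the ``crux'' you flag may not just be unproven but unprovable along this route.

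For contrast, the paper never counts crossings combinatorially at all. It first notes there are $O(n)$ \emph{boundary-corners} (your types (i) and (ii)), then proves that every pixel-corner lies within graph-theoretic distance $t$ of the boundary: a corner $u$ at distance more than $t$ from the boundary has all four neighbours present throughout its radius-$t$ ball, so this ball is a full diamond spanning $2t+3$ rows and columns of the pixelation, and the midpoints of its diagonals delimit a grid with at least $t+1$ rows and columns --- a subgraph of treewidth $t+1$, contradiction. Since vertex degrees are at most $4$, each ball of radius $t$ around a boundary-corner contains $O(3^t)$ vertices, giving $O(3^t n)$ in total. The distance-to-boundary argument is where the full strength of bounded treewidth (and the polygon structure of the rays) enters; your proposal would need either to find this idea or to actually establish a linear Zarankiewicz-type bound for your crossing structure, and at present neither is done.
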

\begin{proof}
First observe  that there are $O(n)$ {\em boundary-corners}, i.e., pixel-corners that lie on the boundary of $P$. This holds because any boundary-corner is either a vertex of $P$ or
results when a ray from a reflex vertex hits the boundary, but there are only two such rays per reflex vertex and each hits the boundary only once.

Now consider for each boundary-corner $v$ the pixel-corners that are within (graph-theoretic) distance $t$ from $v$.  There can be only $O(3^t)$ many such pixel-corners per boundary-corner $v$, since pixel-vertices have maximum degree 4. \footnote{One may be tempted to think that there are only $O(t^2)$ such
vertices, since they form a grid-like structure.  However, it is possible for one vertex to have $3^t$ neighbours of distance $t$, e.g. if the 
graph contains a complete ternary tree of height $t$ (which can be drawn orthogonally without bends).  In all constructions that we tried,
the overall number of vertices is still $O(t^2 n)$, but proving this remains open.} Thus, there are $O(3^t n)$ pixel-corners within distance $t$ of the boundary.

So, we are done unless there exists a pixel-corner $u$ that is not within distance $t$ of the boundary.  Then for any pixel-corner
that is within distance $t$ of $u$, all four possible neighbours in the pixelation exist.  In consequence, the vertices of distance up to $t+1$ of $u$
form a diamond-shape that lies within $2t+3$ rows and columns of the pixelation; see also Figure~\ref{fig:diamond}.

The midpoints of the four diagonals of this diamond hence form the four corners of a grid with at least $t+1$ rows and columns.
Such a grid is well-known to have treewidth $t+1$, a contradiction. So, no such pixel-corner $u$ exists and $\Psi$ has $O(3^t n)$ vertices.
\end{proof}

\begin{figure}[ht]
\centering
\includegraphics[width=0.45\linewidth]{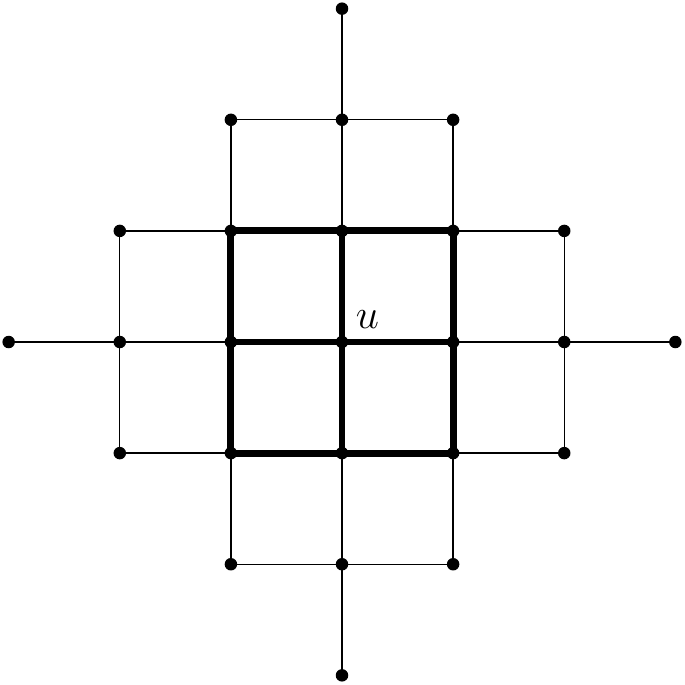}
\caption{A diamond centered at $u$.  We show $t=2$, hence the diamond-shape
lies within a $7\times 7$-grid and gives rise to a $3\times 3$-grid inside it.}
\label{fig:diamond}
\end{figure}

The 1-refinement has asymptotically the same number of vertices as the pixelation, hence it also has $O(3^t n)$ vertices.

\subsection{Reducing the problem size}
In the standard guarding problem, guards can be at an infinite number of points inside $P$, and we must guard the
infinite number of all points inside $P$.  To reduce the guarding problem to a graph problem, we must
argue that it suffices to consider a finite set of guards (we call them {\em candidate guards}) and to check that a finite set of points is guarded
(we call them {\em watch points}). Such reductions are known for $r$-guarding \cite{Biedl016} and
sliding cameras \cite{BiedlCLMMV17}.  Rather than re-proving it for each
guarding type individually, we give here a general condition under which 
such a reduction is possible.

We need a few notations.  First, all our guarding models (with the exception of sliding cameras) use {\em point guards}, i.e., guards are points that belong to $P$. Also, all guarding models are {\em symmetric}; i.e., point $g$ guards point $p$ if and only if $p$ guards $g$. We say that two guarding problems $(\Gamma,X)$ and $(\Gamma',X')$ are {\em equivalent} if given the solution of one of them, we can obtain the solution of the other one in linear time.

\begin{lemma}
\label{lem:shift}
\label{lem:simplifyGeneral}
Let $P$ be an orthogonal polygon with a pixelation $\Psi$. Consider a guarding-model $\alpha$ that uses point guards, is symmetric, and satisfies the following:
\begin{itemize}
\vspace*{-2mm}
\itemsep -2pt
\item[(a)]  For any pixel $\psi$ and any point $g\in P$, if $g$ $\alpha$-guards
	one point $p$ in the interior of $\psi$, then it $\alpha$-guards all points in $\psi$.
\item[(b)]  For any edge $e$ of a pixel and any point $g\in P$, if $g$ $\alpha$-guards
	one point $p$ in the interior of $e$, then it $\alpha$-guards all points on $e$.
\end{itemize}
Then for any (possibly infinite) sets $X,\Gamma \subseteq P$ there exist (finite) sets $X',\Gamma'$ 
such that $(\Gamma,X)$-$\alpha$-guarding and $(\Gamma',X')$-$\alpha$-guarding are equivalent.
Moreover, $X'$ and $\Gamma'$ consist of vertices of the 1-refinement of $\Psi$. 
\end{lemma}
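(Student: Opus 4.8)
The plan is to exploit properties (a) and (b) to show that the $\alpha$-guarding relation is constant on each cell of the pixelation, so that a single representative point can stand in for an entire cell. First I would partition $P$ into its \emph{features}: the (open) interiors of pixels, the (open) interiors of pixel-edges, and the pixel-corners. Since the pixelation is a subdivision of $P$ into closed rectangles whose boundary (including the boundary of $P$ and of its holes) consists of pixel-edges and corners, every point of $P$ lies in exactly one feature. For each feature I would fix a representative that is a vertex of the $1$-refinement: the center of the pixel for a pixel-interior, the midpoint of the edge for an edge-interior, and the corner itself for a corner. This defines a map $\rho\colon P\to V$, where $V$ is the (finite) set of these representatives, sending each point to the representative of its feature; note that $\rho$ is the identity on $V$ and that every representative is indeed a vertex of the $1$-refinement, which will give the ``moreover'' part.

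The heart of the argument is to prove that the guarding relation factors through $\rho$, i.e.\ that for all $g,p\in P$ the point $g$ $\alpha$-guards $p$ if and only if $\rho(g)$ $\alpha$-guards $\rho(p)$. I would establish this in two steps. First, \emph{moving the watched point}: for a fixed guard, properties (a) and (b) say directly that whether it guards an interior point of a feature is an all-or-nothing matter, so $\rho(g)$ guards $p$ if and only if $\rho(g)$ guards $\rho(p)$. Second, \emph{moving the guard}: here I would use symmetry. If $g_1,g_2$ lie in the interior of a common feature $f$ and $g_1$ guards a point $q$, then by symmetry $q$ guards $g_1$, and since $g_1$ is interior to $f$, property (a) or (b) applied to the guard $q$ forces $q$ to guard all of $f$, hence $g_2$; by symmetry again $g_2$ guards $q$. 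Thus all interior points of a feature guard exactly the same set, and in particular $g$ and $\rho(g)$ do, so $g$ guards $p$ if and only if $\rho(g)$ guards $p$. Chaining the two steps yields the claimed factorization.

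With the factorization in hand I would set $X'=\rho(X)$ and $\Gamma'=\rho(\Gamma)$; both are finite subsets of the vertices of the $1$-refinement. To see that the two problems are equivalent, given a solution $S\subseteq\Gamma$ of $(\Gamma,X)$, the set $\rho(S)\subseteq\Gamma'$ guards every $x'=\rho(x)\in X'$, because some $g\in S$ guards $x$ and then $\rho(g)$ guards $\rho(x)$; this shows $\mathrm{opt}(\Gamma',X')\le\mathrm{opt}(\Gamma,X)$. Conversely, given a solution $S'\subseteq\Gamma'$, I would choose for each $g'\in S'$ one preimage $g\in\Gamma$ with $\rho(g)=g'$ — such a preimage exists precisely because $\Gamma'=\rho(\Gamma)$, which is the reason for defining $\Gamma'$ this way rather than as an arbitrary set of representatives — and the resulting $S\subseteq\Gamma$ of the same cardinality guards every $x\in X$, since $\rho(x)\in X'$ is guarded by some $g'=\rho(g)$ and then $g$ guards $x$; this gives $\mathrm{opt}(\Gamma,X)\le\mathrm{opt}(\Gamma',X')$, so the optima coincide and ``no solution exists'' is preserved in both directions. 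Both conversions are relabelings computable in time linear in the solution size. The main obstacle I anticipate is exactly the guard-moving step: properties (a) and (b) are phrased only in terms of what a fixed guard sees, and it is the symmetry of the model that lets me transfer this into a statement about which guards see a fixed point — without symmetry the guard-side reduction would fail, which is why the hypothesis lists it explicitly.
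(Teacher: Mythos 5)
Your proposal is correct and follows essentially the same route as the paper's proof: your representative map $\rho$ is exactly the paper's shifting operation $s$ (pixel center, edge midpoint, identity on vertices), your guard-moving step is the paper's symmetry chain ($g$ guards $p$ iff $p$ guards $g$ iff $p$ guards $s(g)$ iff $s(g)$ guards $p$), and your two-sided counting argument with preimages under $\rho$ matches the paper's construction of $s(S)$ and $S''$. The only cosmetic difference is that you isolate the factorization through $\rho$ as an explicit two-step claim, which the paper carries out inline.
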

\begin{proof}
Let $V_0$ be the vertices of the pixelation and $V_1\supseteq V_0$ 
be the vertices of the 1-refinement. We repeatedly need the following ``shifting''-operation $s(a)$
that shifts a point $a\in P$ to the nearest point in $V_1$ that act the same with respect to pixels.  Formally, for any
$a\in A$, if $a\in V_0$ then $s(a)=a$. If $a\not \in V_0$, but $a$ lies on an edge $e$ of the
standard pixelation, then let $s(a)$ be the midpoint of $e$.
If $a$ lies on neither vertex nor edge of the standard pixelation, 
then it belongs to the interior of a pixel $\psi$; let $s(a)$ be
the center of $\psi$.  We have $s(a)\in V_1$ in all cases.
For $A\subseteq P$, define $s(A):=\bigcup_{a\in A} s(a)$, and
note that $s(A)$ is finite even if $A$ is not.

Observe that some point $g$ guards a point $p$ if and only if $g$ guards
the point $s(p)$.  This is obvious if $p\in V_0$, since then $s(p)=p$.  If $p\not\in V_0$, then it is either
on an edge of a pixel (then $s(p)$ is the midpoint of that edge),
or it is in the interior of a pixel (then $s(p)$ is the center of that
pixel).  By the assumption on $\alpha$-guarding, therefore $g$ guards $s(p)$
if and only if it guards the entire edge/pixel, which it does if and only
if $g$ guards $p$.   So we can shift watch-points without affecting whether
they are guarded.  By symmetry of $\alpha$-guarding, we can also shift guards.
Namely, $g$ guards $p$ if and only if $p$ guards $g$, if and only if $p$ guards $s(g)$, if
and only if $s(g)$ guards $p$.

We claim that using $X':=s(X)$ and $\Gamma':=s(\Gamma)$ gives the result.
To see this, let $S,S'$ be solutions to the $(\Gamma,X)$-$\alpha$-guarding
and $(\Gamma',X')$-$\alpha$-guarding problem, respectively.  
Consider the set $s(S)$, which guards $X$ since $S$ does.    
Therefore $s(S)$ also guards $s(X)=X'$, and it is a solution to the 
$(\Gamma',X')$-$\alpha$-guarding problem.  So $|S'|\leq |s(S)| \leq |S|$.

For the other inequality, define a new set $S''$ as follows.
For every guard $g'\in S'\subseteq \Gamma'=s(\Gamma)$, let $g\in \Gamma$
be a guard with $s(g)=g'$ (breaking ties arbitrarily), and
add $g$ to $S''$.  Then $s(S'')=S'$,
and so $S''\subseteq \Gamma$ guards $X'$ since $S'$ does.    
By $X'=s(X)$, therefore $S''$ guards $X$.  So $S''$ is a solution to the
$(\Gamma,X)$-$\alpha$-guarding problem,  and $|S|\leq |S''| \leq |S'|$.

Putting it together, we have $|S|=|S'|$, and we can obtain one from the other via a shifting operation or its inverse. Therefore, the two problems are equivalent.
\end{proof}

It is easy to see that the conditions of Lemma~\ref{lem:shift} are satisfied for $r$-guarding, $s$-guarding and $k$-periscope guarding (for any $k$). For completeness' sake we give here the argument for $s$-guards and only for the first condition; the proof is similar for the second condition and all other guarding models.
\begin{claim}
\label{claim:interior2boundary}
In the standard pixelation, for any pixel $\psi$ and any point $g\in P$, if $g$ $s$-guards
	one point $p$ in the interior of $\psi$, then it $s$-guards all points in $\psi$.
\end{claim}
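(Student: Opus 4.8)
The plan is to prove the statement topologically, treating all of $\psi$ at once rather than transporting a single staircase. Write $\mathcal{V}\subseteq P$ for the set of points that $g$ $s$-guards (its \emph{$s$-visibility region}), and let $\psi$ have interior $\mathrm{int}(\psi)$ and closure $\bar\psi$. I will show $\bar\psi\subseteq\mathcal V$; since $p\in\mathcal V\cap\mathrm{int}(\psi)$ this is exactly what is claimed. The whole argument rests on the observation that, because $\psi$ is a pixel of the \emph{standard} pixelation, its interior is disjoint both from $\partial P$ and from every ray used to build the pixelation.

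The two ingredients I would establish are: (i) $\mathcal V$ is closed, and (ii) inside the open region $\mathrm{int}(P)$ the boundary $\partial\mathcal V$ runs only along rays of the standard pixelation. For (i) I would take a sequence $q_n\to q$ of guarded points and extract, by compactness, a limiting staircase from $g$ to $q$ (each $q_n$ is reached by a monotone orthogonal path in the compact set $P$). For (ii) I would use the standard description of a staircase-visibility boundary: near a point $z\in\partial\mathcal V\cap\mathrm{int}(P)$ one side consists of guarded points and the other of unguarded points, and this jump must be caused by the polygon blocking the monotone paths that reach the guarded side. Because the paths are orthogonal and monotone, the blocking object is a reflex vertex $v$ of $P$ and the separating ``window'' through $z$ is the maximal axis-parallel segment obtained by extending, inward, the edge of $P$ at $v$ along which those paths graze $v$. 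But the inward extensions of the two edges at a reflex vertex are precisely the horizontal and vertical rays that define the standard pixelation. Hence every such window lies on a pixelation ray, and $\partial\mathcal V\cap\mathrm{int}(P)$ is contained in the union of these rays together with $\partial P$.

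Granting (i) and (ii), the conclusion is immediate. By definition of the standard pixelation, $\mathrm{int}(\psi)$ meets neither $\partial P$ nor any pixelation ray, so by (ii) it is disjoint from $\partial\mathcal V$. The point $p$ lies in $\mathcal V\cap\mathrm{int}(\psi)$ and, not being on $\partial\mathcal V$, lies in the interior of $\mathcal V$. Now $\mathrm{int}(\psi)$ is connected, contains the interior point $p$ of $\mathcal V$, and avoids $\partial\mathcal V$; a connected set that meets the interior of $\mathcal V$ but not its boundary is contained in $\mathcal V$, so $\mathrm{int}(\psi)\subseteq\mathcal V$. Taking closures and using (i) gives $\bar\psi=\overline{\mathrm{int}(\psi)}\subseteq\overline{\mathcal V}=\mathcal V$, i.e. $g$ $s$-guards every point of $\psi$.

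The main obstacle is clearly ingredient (ii): making precise that the only interior portions of $\partial\mathcal V$ are axis-parallel windows anchored at reflex vertices, and that each such window is exactly an inward edge-extension (a pixelation ray) rather than some other segment through $v$. Here I would lean on monotonicity --- a staircase that just fails to reach a shadowed point must hug a polygon edge up to a reflex vertex and could only continue straight --- and on the symmetry among the four staircase directions (north-east, north-west, south-east, south-west), which lets me analyse one direction and obtain the others by reflection. Everything else is routine: closedness in (i) and the point-set topology in the final step are standard, and the remaining cases of Lemma~\ref{lem:shift} (the second condition, and the other guarding models) follow by the same region-boundary argument.
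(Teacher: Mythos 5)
Your route is genuinely different from the paper's (which simply transports the given staircase: it takes the \texttt{NW}-staircase $\pi$ from $g$ to $p$, considers the maximal segments through the right and bottom sides of $\psi$, and in two short cases reroutes $\pi$ to reach every point of $\psi$), but as planned it has a real gap in ingredient (ii). Your stated plan is to ``analyse one direction and obtain the others by reflection''. For a \emph{single} direction the claimed description of the boundary is false: the \texttt{NE}-visibility region of $g$ is contained in the quadrant $\{x\ge x_g,\,y\ge y_g\}$, so $\partial\mathcal{V}_{\texttt{NE}}\cap \mathrm{int}(P)$ in general contains pieces of the vertical and horizontal lines \emph{through $g$ itself}. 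Since the claim allows an arbitrary point $g\in P$ --- in particular $g$ in the interior of a pixel --- these lines are \emph{not} pixelation rays and can cut straight through pixel interiors (indeed, Figure~\ref{fig:notReducibleURG} in the paper exists precisely because single-direction visibility violates the pixel-by-pixel property you are trying to prove). These through-$g$ boundary pieces do disappear in the four-way union $\mathcal{V}=\mathcal{V}_{\texttt{NE}}\cup\mathcal{V}_{\texttt{NW}}\cup\mathcal{V}_{\texttt{SE}}\cup\mathcal{V}_{\texttt{SW}}$, but showing this requires an argument that genuinely combines directions (e.g.: if points $(x_g+\epsilon,y^*)$ are \texttt{NE}-reachable for all small $\epsilon>0$, monotonicity forces the witnessing staircases into the strip $[x_g,x_g+\epsilon]\times[y_g,y^*]$, whence the segment $\{x_g\}\times[y_g,y^*]$ lies in $P$ and points just west of it are \texttt{NW}-reachable unless the polygon boundary meets the line $x=x_g$, which puts a reflex vertex and hence a pixelation ray there). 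Reflection symmetry alone cannot produce this cancellation, and your sketch of (ii) --- ``the blocking object is a reflex vertex'' --- silently assumes the only obstruction is the polygon, overlooking the obstruction imposed by monotonicity relative to $g$. Relatedly, the ``standard description'' of staircase-visibility windows that you lean on is essentially the entire content of the claim (and the references treat simple polygons, while $P$ here may have holes), so deferring it leaves the core unproved; the paper's two-case construction is, in effect, the honest version of that missing work.

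A secondary gap is ingredient (i): a uniform limit of \texttt{NE}-staircases (via Arzel\`a--Ascoli on the $L_1$-arclength parametrizations, which are uniformly bounded by $\|q_n-g\|_1$) is only an $x$- and $y$-monotone \emph{curve} in $P$, not a staircase; you still must convert it to an orthogonal monotone path inside $P$, e.g.\ by first bounding the number of bends needed (snapping bends to pixelation coordinates, in the spirit of Lemma~\ref{lem:pathOnPixelation}, which as stated applies only to paths between pixelation vertices) so that the space of candidate staircases is finite-dimensional and closed. Both gaps look repairable, and the resulting topological argument would be an interesting alternative --- it handles all of $\psi$ at once and adapts to the other guarding models --- but as written it is substantially less complete than the paper's direct proof, which establishes the claim in a few lines without any regularity theory for the visibility region.
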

\begin{proof}
Consider the staircase $\pi$ from $g$ to $p$, and assume that it is a \texttt{NW}-staircase, all other cases are symmetric. Let $s_\texttt{E}$ and $s_\texttt{S}$ be the maximal segment within $P$ through the right and bottom side of $\psi$, respectively. Assume first that $\pi$ crosses both $s_\texttt{E}$ and $s_\texttt{S}$, and let $q$ be the first point (while walking from $g$ to $p$) that lies on one of them, say $q$ is on $s_\texttt{E}$; see Figure~\ref{fig:reduction}(a). Since $\pi$ is a $\texttt{NW}$-staircase, point $q$ must be below $s_\texttt{S}$. We can then guard any point in $\psi$ by walking from $g$ to $q$ along $\pi$, then along $s_\texttt{E}$ to the south-east corner of $\psi$, and from there with at most one bend to any point in $\psi$.

\begin{figure}[t]
\centering
\includegraphics[width=0.75\linewidth,page=2,trim=0 20 0 0,clip]{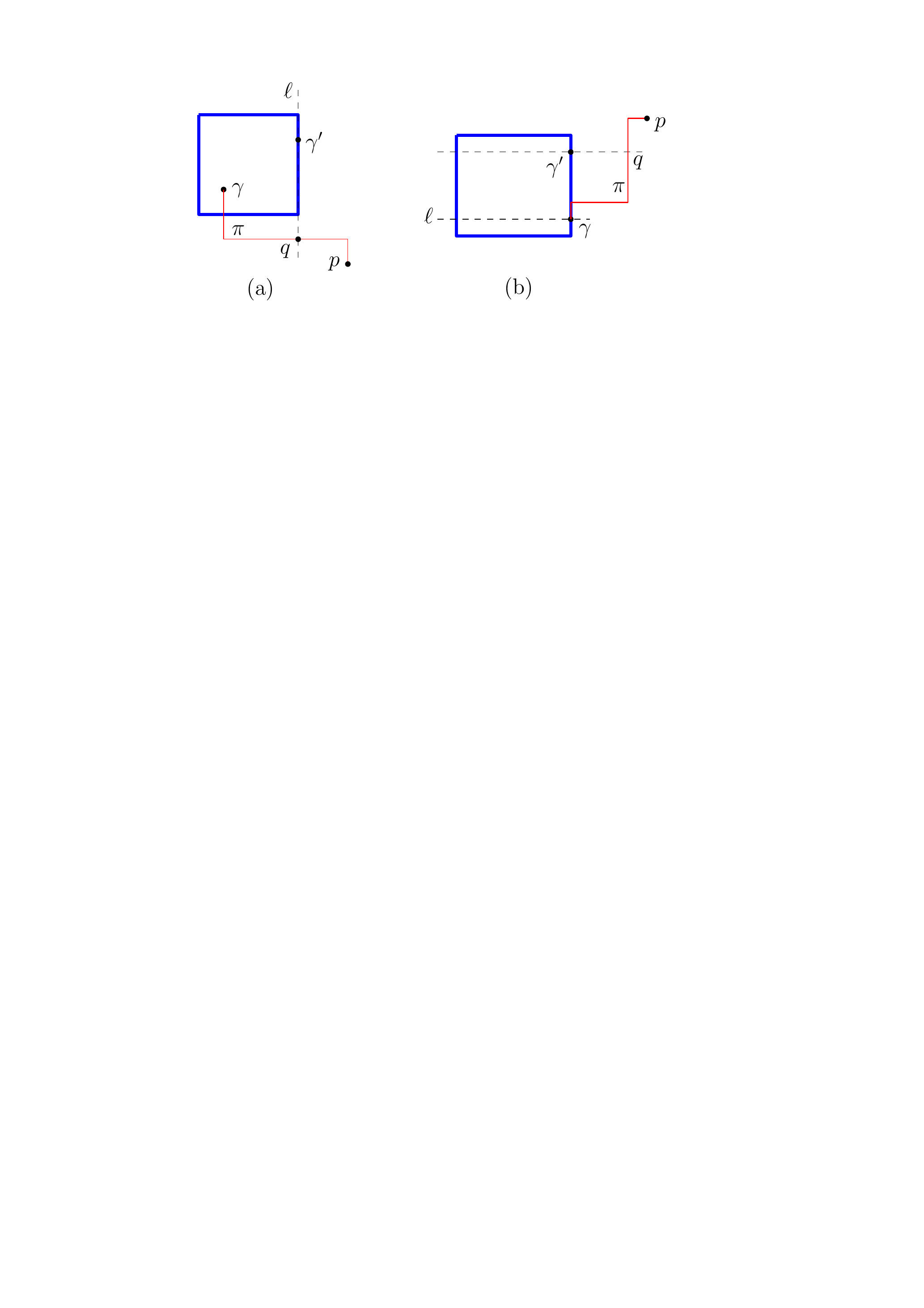}
\caption{Two cases for the proof of Claim~\ref{claim:interior2boundary}. }
\label{fig:reduction}
\end{figure}

Now, assume that $\pi$ does not cross, say, $s_\texttt{S}$ (see Figure~\ref{fig:reduction}(b)).  Since $\pi$ is a $\texttt{NW}$-staircase, therefore the entire path $\pi$ stays within the pixels directly to the right of $\psi$. These form a rectangle inside $P$, and any two points within a rectangle can $s$-guard each other.  So $g$ $s$-guards all points in $\psi$.
\end{proof}

\section{Algorithm for $(\Gamma,X)$-$s$-guarding}
\label{sec:algSGuarding}
In this section, we give a linear-time algorithm for the $(\Gamma,X)$-$s$-guarding problem on any orthogonal polygon $P$ 
with bounded treewidth. By Lemma~\ref{lem:simplifyGeneral}, we may assume that $\Gamma$ and $X$ consist of vertices of the 1-refinement of the standard pixelation. As argued earlier, the 1-refinement also has bounded treewidth. Thus, it suffices to solve the $(\Gamma, X)$-$s$-guarding where $\Gamma$ and $X$ are vertices of some pixelation $\Psi$ that has bounded treewidth.

\subsection{$(\Gamma, X)$-\urg-Guarding} 
\label{subsec:urg}
For ease of explanation, we first solve a special case where guards can look in only two of the four directions and then show how to generalize it to $s$-guarding. We say that a point \emph{$g$ \urg-guards} a point $p$ if there exists an orthogonal path $\pi$ inside $P$ from $g$ to $p$ that goes alternately \emph{north} and \emph{east}; we call $\pi$ a \emph{\urg-path}. Define \ulg-, \drg- and \dlg-guarding analogously.

\begin{wrapfigure}{r}{0.35\linewidth}
\centering
\includegraphics[width=0.60\linewidth]{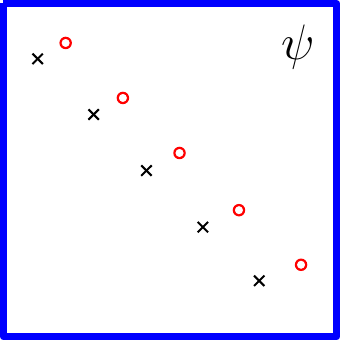}
\caption{One pixel needs many guards.}
\label{fig:notReducibleURG}
\end{wrapfigure}

Note that $\urg$-guarding does not satisfy the conditions of Lemma~\ref{lem:simplifyGeneral}; see 
e.g.~Figure~\ref{fig:notReducibleURG} where all crosses are needed to \urg-guard all circles. So, we cannot solve the $\urg$-guarding problem in general, but we can solve
$(X,\Gamma)$-\urg-guarding  since we already know that $X$ and $\Gamma$ are vertices of the pixelation.

\paragraph{Constructing an auxiliary graph $H$.} Define graph $H$ to be the graph of the pixelation of $P$ and direct each edge of $H$ toward north or east;
see Figure~\ref{fig:graphHRefine} for an example. By assumption $X\subseteq V(H)$ and $\Gamma\subseteq V(H)$.

\begin{figure}[t]
\centering
\includegraphics[width=0.60\linewidth,trim=250 0 0 0,clip]{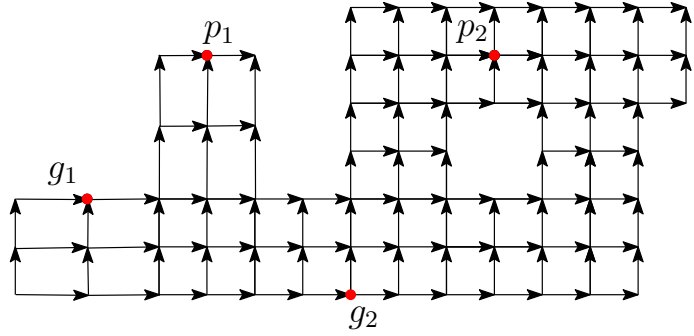}
\caption{The graph $H$ corresponding to \urg-guarding the polygon of Figure~\ref{fig:pixelAndRefinement}.  Guards and points have been shifted to pixelation-vertices.}
\label{fig:graphHRefine}
\end{figure}

By Lemma~\ref{lem:pathOnPixelation}, there exists an \urg-path from guard $g\in \Gamma$ to point $p\in X$ if and only if there exists one along pixelation-edges.  With our choice of edge-directions for $H$, hence there exists such a \urg-path if and only if there exists a directed path from $g$ to $p$ in $H$. Thus, $(\Gamma,X)$-\urg-guarding reduces to the following problem which we call {\em reachability-cover}: given a directed graph $G$ and vertex sets $A$ and $B$, find a minimum set $S\subseteq A$ such that for any $t\in B$ there exists an $s\in S$ with a directed path from $s$ to $t$. $(X,\Gamma)$-\urg-guarding  is equivalent to reachability-cover in graph $H$ using $A:=\Gamma$ and $B:=X$.

Reachability-cover is NP-hard because set cover can easily be expressed in it. We now argue that reachability-cover can be solved in graphs of bounded treewidth, by appealing to monadic second order logic or MSOL (see \cite{Courcelle97} for an overview).  Briefly, this means that the desired graph property can be expressed as
a logical formula that may have quantifications, but only on variables and sets. Courcelle's theorem states that any problem expressible in MSOL can be solved in linear time on 
graphs of bounded treewidth~\cite{Courcelle90}. (Courcelle's original result was only for decision problems, but it can easily be generalized to minimization problems.)
Define \texttt{Reachability}$(u,v,G)$ to be the property that there exists a directed
path from $u$ to $v$ in a directed graph $G$.  This can be expressed in MSOL \cite{Courcelle97}.
Consequently, the $(\Gamma, X)$-\urg-guarding problem can be expressed in MSOL as follows:
\[
\exists S \subseteq \Gamma: \forall p\in X: \exists g\in S: \texttt{Reachability}(g,p,H).
\]
So, we can solve the $(\Gamma, X)$-\urg-guarding problem if $\Gamma$ and $X$ are vertices of a given pixelation that has bounded treewidth.

\subsection{$(\Gamma, X)$-$s$-guarding}
\label{subsec:sGuarding}
Solving the $(\Gamma,X)$-$s$-guarding problem now becomes very simple, by exploiting that a guard $g$ $s$-guards a point $p$ if only if $g$ $\beta$-guards $p$ for some 
$\beta\in$\{\urg, \ulg, \drg, \dlg\}. We can solve the $(\Gamma,X)$-$\beta$-guarding problem 
for $\beta\neq \urg$ similarly as in the previous section, by directing the auxiliary graph $H$ according to the directions
we wish to take.  Let $H_\urg,H_\ulg,H_\drg,H_\dlg$ be the four copies of graph $H$
(directed in four different ways) that we get. Define a new auxiliary graph $H^*$ as follows
(see also Figure~\ref{fig:fourGraphHExample}): initially, let $H^*:=H_\urg\cup H_\ulg\cup H_\drg\cup H_\dlg$. 
For each $g\in\Gamma$, add to $H^*$ a new vertex $v^\Gamma(g)$ and the directed edges $(v^\Gamma(g),v_\beta(g))$ 
where $v_{\beta}(g)$ (for $\beta\in$\{\urg, \ulg, \drg, \dlg\}) is the vertex in $H_{\beta})$ corresponding to $g$. Similarly, for each $p\in X$, add to $H^*$ a new vertex $v^X(p)$ and the directed edges $(v_\beta(p),v^X(p))$ for $\beta\in$\{\urg, \ulg, \drg, \dlg\}).

If some guard $g$ $s$-guards a point $p$, then there exists a $\beta$-path from $g$ to $p$ inside $P$ for some $\beta\in$\{\urg, \ulg, \drg, \dlg\}.  We can turn this path into a $\beta$-path along pixelation-edges by Lemma~\ref{lem:pathOnPixelation}, and therefore
find a path from $v^\Gamma(g)$ to $v^X(p)$ by going to $H_\beta$ and following the path within it.  Vice versa, any directed path from $v^\Gamma(g)$ to $v^X(p)$
must stay inside $H_\beta$ for some $\beta\in$\{\urg, \ulg, \drg, \dlg\} since $v^\Gamma(g)$ is a source and
$v^X(p)$ is a sink.  Therefore, $(\Gamma,X)$-$s$-guarding is the same as reachability-cover in $H^*$
with respect to the sets $V(\Gamma):=\{v^\Gamma(g): g \in \Gamma\}$ and $V(X):=\{v^X(p): p\in X\}$.

\begin{figure}[t]
\centering
\includegraphics[width=0.60\textwidth,trim=0 100 0 0,clip]{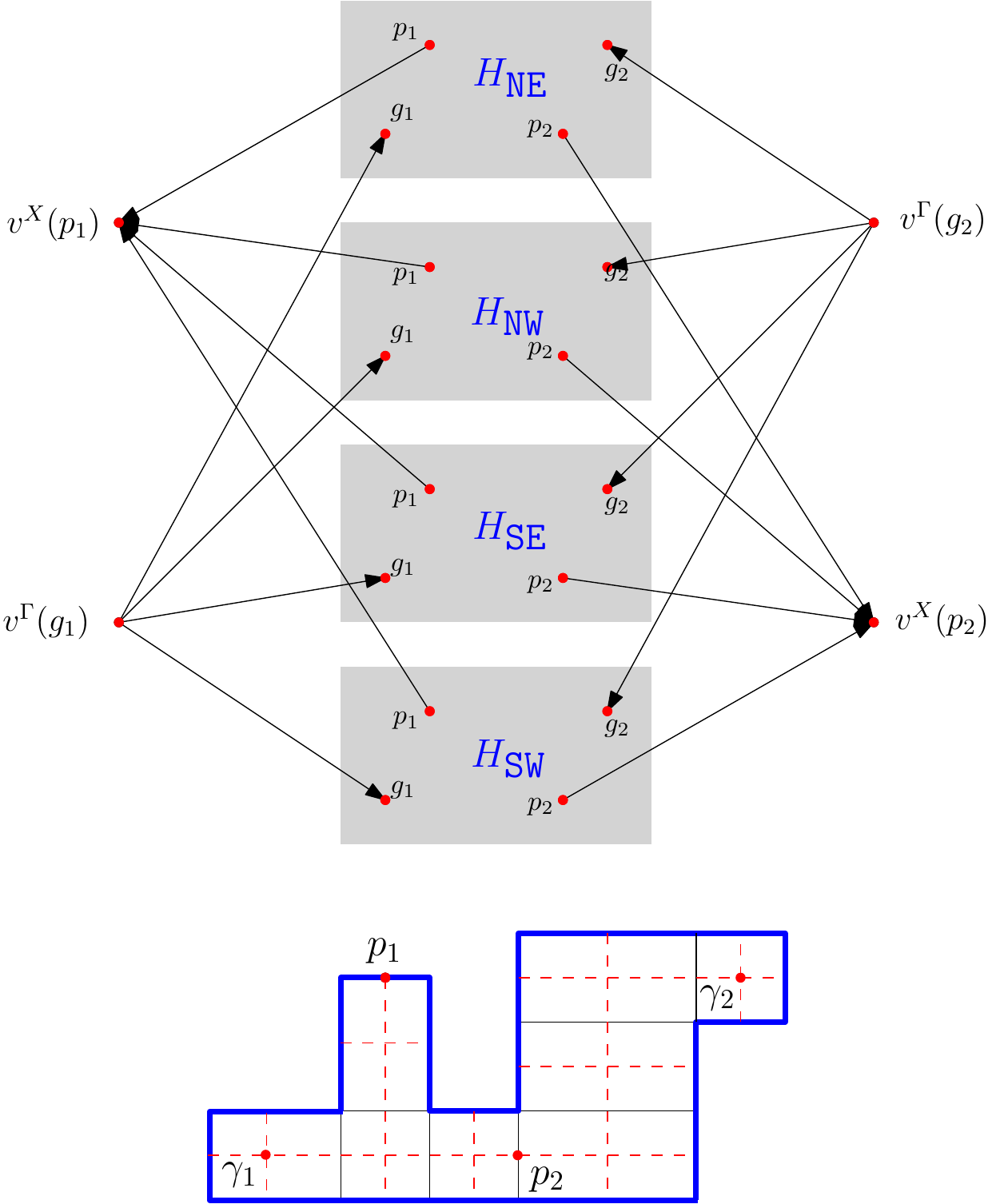}
\caption{The construction of graph $H^*$.}
\label{fig:fourGraphHExample}
\end{figure}

It remains to argue that $H^*$ has bounded treewidth.  To do so, take a tree decomposition of ${\cal T}=(I,{\cal X})$ of the pixelation.  For each bag $X\in {\cal X}$, create a new
bag $X'$ by replacing $x\in X$ with the four copies $v_{\beta}(x)$, as well as $v^\Gamma(x)$ if $x\in \Gamma$ and $v^X(x)$ if $x\in P$.  Clearly ${\cal T}':=(I,\{X'\})$ is tree decomposition of $H^*$ whose width is proportional to the one of ${\cal T}$.

Now, we put it all together.  Assume $P$ has bounded treewidth, hence its standard pixelation has bounded treewidth and $O(n)$ edges, and so does its 1-refinement.  This is the pixelation we use to obtain $H^*$, therefore $H^*$ also has bounded treewidth and $O(n)$ edges. We can apply Courcelle's theorem to solve reachability-cover in $H^*$ and obtain:
\begin{theorem}
\label{thm:mainSGuardingResult}
Let $P$ be an orthogonal polygon with bounded treewidth. Then, there exists an linear-time algorithm for the $(\Gamma, X)$-$s$-guarding problem on $P$.
\end{theorem}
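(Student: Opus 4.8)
The plan is to reduce $(\Gamma,X)$-$s$-guarding to the reachability-cover problem on a directed graph of bounded treewidth, and then invoke Courcelle's theorem. First I would apply Lemma~\ref{lem:simplifyGeneral} to replace the possibly infinite sets $\Gamma$ and $X$ by finite sets consisting of vertices of the 1-refinement of the standard pixelation; this is justified because $s$-guarding uses point guards, is symmetric, and (by Claim~\ref{claim:interior2boundary} together with its symmetric analogues) satisfies conditions (a) and (b). By Observation~\ref{obs:boundedTreewidthOfRefinement} the 1-refinement $\Psi$ still has bounded treewidth, so it suffices to solve $(\Gamma,X)$-$s$-guarding in the case where $\Gamma$ and $X$ are vertices of a pixelation $\Psi$ of bounded treewidth.

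The conceptual core is the observation that a staircase, being simultaneously $x$-monotone and $y$-monotone, proceeds in exactly one of the four diagonal directions; hence $g$ $s$-guards $p$ if and only if $g$ $\beta$-guards $p$ for some $\beta\in\{\urg,\ulg,\drg,\dlg\}$. For each $\beta$ I would orient the pixelation graph $H$ according to its two permitted directions to obtain $H_\beta$, and by Lemma~\ref{lem:pathOnPixelation} a $\beta$-path between two pixelation-vertices can be taken to run along pixelation edges. Thus a $\beta$-path from $g$ to $p$ exists in $P$ if and only if there is a directed path from $g$ to $p$ in $H_\beta$, exactly as in the $\urg$-guarding construction of Section~\ref{subsec:urg}.

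Next I would glue the four directed copies into a single graph $H^*=H_\urg\cup H_\ulg\cup H_\drg\cup H_\dlg$, adding for each guard $g\in\Gamma$ a source $v^\Gamma(g)$ with edges into the four copies of $g$, and for each point $p\in X$ a sink $v^X(p)$ with edges from the four copies of $p$. Because the four copies are vertex-disjoint and carry no edges between them, any directed path from $v^\Gamma(g)$ to $v^X(p)$ must enter a single copy $H_\beta$ and stay there until it reaches the sink; combined with the previous paragraph this yields that $g$ $s$-guards $p$ if and only if $v^X(p)$ is reachable from $v^\Gamma(g)$ in $H^*$. Consequently $(\Gamma,X)$-$s$-guarding coincides with reachability-cover in $H^*$ for $A:=V(\Gamma)$ and $B:=V(X)$.

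The remaining technical step, which I expect to be the main (though routine) obstacle, is to certify that $H^*$ has bounded treewidth. Starting from a tree decomposition of $\Psi$, I would replace each occurrence of a vertex $x$ by the constantly many vertices of $H^*$ associated with it (its four copies $v_\beta(x)$, plus $v^\Gamma(x)$ when $x\in\Gamma$ and $v^X(x)$ when $x\in X$); since bag sizes grow by at most a constant factor, $H^*$ has bounded treewidth and $O(n)$ edges. Finally, since \texttt{Reachability} is MSOL-expressible, reachability-cover is expressible as $\exists S\subseteq A:\forall t\in B:\exists s\in S:\texttt{Reachability}(s,t,H^*)$, so Courcelle's theorem in its optimization form solves it, and hence the original problem, in linear time.
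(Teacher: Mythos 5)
Your proposal is correct and follows essentially the same route as the paper's proof: reduction to pixelation-vertices via Lemma~\ref{lem:simplifyGeneral}, the four directed copies $H_\beta$ glued into $H^*$ by source vertices $v^\Gamma(g)$ and sink vertices $v^X(p)$, the observation that a directed path stays within one copy, the lifted tree decomposition showing $H^*$ has bounded treewidth, and Courcelle's theorem applied to the MSOL formulation of reachability-cover. There are no gaps; even your justification that paths cannot switch copies (disjointness of the copies plus the source/sink structure) matches the paper's argument.
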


Our result as stated is of mostly theoretical interest, since the dependence of the run-time on the treewidth $t$ is not clear (and in general, may be quite high if applying Courcelle's theorem).  However, it is not hard to solve the reachability-cover problem directly, via dynamic programming in a tree decomposition, in time $O(2^t n)$ for a graph with treewidth $t$.  (We leave the details to the reader; see for example \cite{Bod08} for other examples of such dynamic programming algorithms.)  As such, the algorithm should be quite feasible to implement for small values of $t$.

\section{Other Guarding Types}
\label{sec:otherTypes}
In this section, we show how similar methods apply to other types of orthogonal guarding. The main difference is that we need edge-weights on the auxiliary graph.  To solve the guarding problem, we hence use a version of reachability-cover defined as follows. The {\em $(G,A,B,W)$-bounded-reachability-cover problem} has as input an edge-weighted directed graph $G$, two vertex sets $A$ and $B$, and a length-bound $W$. The objective is to find a minimum-cardinality set $S\subseteq A$ such that for any $t\in B$ there exists an $s\in S$ with a directed path from $s$ to $t$ that has length at most $W$. We need to argue that this problem is solvable if $G$ has bounded treewidth, at least if $W$ is sufficiently small. Recall that reachability-cover can be expressed in monadic second-order logic. Arnborg et al.~\cite{ArnborgLS91} introduced the class of \emph{extended} monadic second-order problems which allow integer weights on the input. They showed that problems expressible in extended monadic second-order logic can be solved on graphs of bounded treewidth, with a run-time that is polynomial in the graph-size and the maximum weight.

\subsection{$L_1$-distance guarding}
We first study the $L_1$-distance guarding problem. Recall that this means that a guard $g$ can see (for a given upper bound $D$) all points $p$ for which there exists an orthogonal path $\pi$ of length at most $D$. Our approach works only under some restrictions, but is so simple that we will describe it anyway.

We have not been able to solve the $L_1$-distance guarding problem for all polygons of bounded treewidth, for two reasons.  First, the $L_1$-distance does not satisfy the conditions of Lemma~\ref{lem:simplifyGeneral}.  (For example, if there is a candidate guard at the top-right corner of the pixel in Figure~\ref{fig:notReducibleURG}, then for a suitable value of $D$ it can see all the  circle-points but none of the cross-points.)  Thus we only consider the $(\Gamma,X)$-$L_1$-distance guarding problem where $\Gamma$ and $X$ are vertices of a pixelation that has bounded treewidth.  The second problem is that the bounded-reachability-cover problem is solved in run-time that depends on the maximum weight. For this to be polynomial, we must assume that all edges of the input-polygon have integer length that is polynomial in $n$.

Let $\Gamma$ and $X$ be subsets of the vertices of some pixelation $\Psi$ of $P$.  Let $H_{\text{dist}}$ be the auxiliary graph obtained from the pixelation graph by making all edges bi-directional.  Set the weight of each edge to be its length.  If a guard $g\in \Gamma$ sees a point $p\in X$ in the $L_1$-distance guarding model (with distance-bound $D$), then there exists a path $\pi$ from $g$ to $p$ that has length at most $D$.  By Lemma~\ref{lem:pathOnPixelation}, we may assume that $\pi$ runs along pixel-edges.  Hence $\pi$ gives rise to a directed path in $H_{\text{dist}}$ of length less than $D$.  Vice versa, any such path in $H_{\text{dist}}$ means that $g$ can $L_1$-distance-guard $p$.  In consequence, the $(\Gamma,X)$-$L_1$-distance guarding problem is the same as the $(\Gamma,X,H_{\text{dist}},D)$-bounded-reachability-cover problem, and we have:
\begin{theorem}
The $(\Gamma,X)$-$L_1$-distance guarding problem can be solved in a polygon $P$ with polynomial integral edge-lengths, presuming that $P$ has a pixelation $\Psi$ of bounded treewidth that has $\Gamma$ and $X$ at its vertices.
\end{theorem}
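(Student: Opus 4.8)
The plan is to make precise the reduction already sketched and then discharge the two conditions under which the extended monadic second-order machinery applies. First I would fix the auxiliary graph $H_{\text{dist}}$: take the graph of the pixelation $\Psi$, replace each pixel-edge by two oppositely directed arcs, and weight each arc by the length of the corresponding edge. By Lemma~\ref{lem:pathOnPixelation}, every orthogonal path inside $P$ between two vertices of $\Psi$ can be replaced by one that is no longer and runs along pixel-edges; hence a guard $g\in\Gamma$ $L_1$-distance-guards a point $p\in X$ with bound $D$ if and only if $H_{\text{dist}}$ contains a directed $g$-to-$p$ path of total weight at most $D$. Since $\Gamma$ and $X$ are, by hypothesis, already vertices of $\Psi$, no shifting via Lemma~\ref{lem:simplifyGeneral} is needed, and the guarding instance is exactly the $(H_{\text{dist}},\Gamma,X,D)$-bounded-reachability-cover instance; a solution $S\subseteq\Gamma$ of the latter is immediately a minimum set of guards.

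It then remains to check the two inputs to the extended-MSOL theorem. For \emph{treewidth}: orienting edges and adding reverse copies does not change the underlying undirected graph (parallel edges are irrelevant to treewidth), so $H_{\text{dist}}$ has the same bounded treewidth as $\Psi$. Its size is polynomial in $n$---for the pixelations of interest, such as the standard pixelation or its $1$-refinement, this is the $O(3^t n)$ bound shown earlier---so $H_{\text{dist}}$ has polynomially many arcs. For the \emph{weight bound}: because every edge-length is a polynomially-bounded integer and there are polynomially many edges, the sum of all arc-weights is polynomial in $n$. As all weights are nonnegative, the existence of a $g$-to-$p$ path of weight $\le D$ is equivalent to the existence of a \emph{simple} such path, whose weight never exceeds this total; we may therefore replace $D$ by $\min\{D,\ \text{(total arc-weight)}\}$ without changing any answer, so the effective maximum weight handed to the algorithm is polynomial.

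With both conditions in hand, the theorem follows by invoking the result of Arnborg et al.~\cite{ArnborgLS91}: bounded-reachability-cover is an extended monadic second-order optimization problem, so it is solvable on the bounded-treewidth graph $H_{\text{dist}}$ in time polynomial in the graph-size and the maximum weight, both of which are polynomial in $n$ here.

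I expect the main obstacle to be the weight bound rather than the reduction itself, which is why the two hypotheses in the statement are both essential. The extended-MSOL algorithm is only pseudo-polynomial in the arc-weights, so the entire polynomial-time claim rests on the assumption of polynomial integral edge-lengths together with the polynomial arc-count guaranteed by bounded treewidth; dropping either would leave the running time pseudo-polynomial. A secondary point to verify carefully is that bounded-reachability-cover genuinely fits the extended monadic second-order format---that is, that ``there is a directed path of total weight at most $D$'' can be captured as a weighted MSOL predicate, since a plain MSOL formula cannot sum edge-lengths along a path of unbounded combinatorial length.
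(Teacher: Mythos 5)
Your proposal is correct and follows essentially the same route as the paper: the identical bidirected, length-weighted auxiliary graph $H_{\text{dist}}$, the same appeal to Lemma~\ref{lem:pathOnPixelation} to equate $L_1$-guarding with weight-bounded directed reachability, and the same reduction to bounded-reachability-cover solved via the extended-MSOL result of Arnborg et al.~\cite{ArnborgLS91}. Your added checks (treewidth is preserved under orienting/doubling edges, and $D$ may be capped by the total arc-weight so the effective maximum weight is polynomial) are sensible refinements that the paper leaves implicit, not a different argument.
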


\subsection{$k$-periscope guarding}
Next we turn to $k$-periscope guards, where guard $g$ can see a point $p$ if there exists an orthogonal path $\pi$ from $g$ to $p$ with at most $k$ bends.  Similarly as for
$s$-guards, one can show that $k$-periscope guards satisfy the conditions of Lemma~\ref{lem:simplifyGeneral}, so we can simplify any input to use a discrete set of candidate guards and watch points that are vertices of the 1-refinement of the standard pixelation.  Any path $\pi$ from such a candidate guard to such a watch point can be assumed to reside on pixel-edges by Lemma~\ref{lem:pathOnPixelation}.

\begin{wrapfigure}{r}{0.35\linewidth}
\centering
\includegraphics[width=0.55\linewidth]{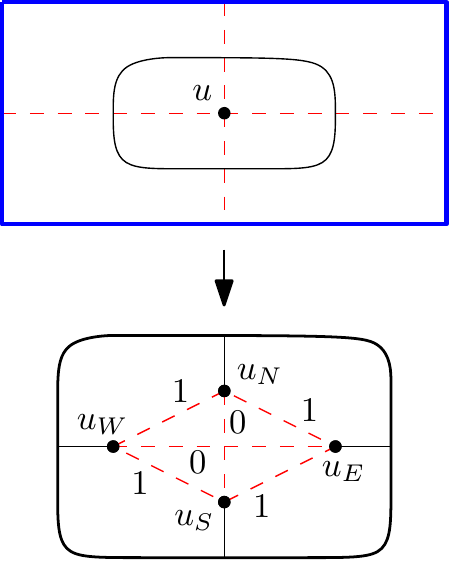}
\caption{Adding $K_4$.}
\label{fig:ak4}
\end{wrapfigure}

For $k$-periscope guarding, we define an auxiliary graph $H_{\text{peri}}$ based on the graph of the pixelation, but modify it near each vertex and add weights to encode the number of bends, rather than the length, of a path. If $u$ is a vertex of the pixelation, then replace it with a $K_4$ as shown in Figure~\ref{fig:ak4}. We denote this copy of $K_4$ by $K_4^u$, and let its four vertices be $u_N,u_S,u_W$ and $u_E$ according to compass directions.  For a vertex $u$ on the boundary of $P$ we omit those vertices in $K_4^u$ that would
fall outside $P$.  We connect copies $K_4^u$ and $K_4^v$ of a pixel-edge $(u,v)$ in the natural way, e.g. if $(u,v)$ was vertical with $u$ below $v$, then we connect $u_N$ to $v_S$.  All edges are bidirectional.

For any $g\in \Gamma$, define a new vertex $v^\Gamma(g)$ and add edges from it to all of $g_N,g_S,g_E,g_W$ that exist in the graph. For any $p\in X$, define a new vertex $v^X(p)$ and add edges from all of $p_N,p_S,p_E,p_W$ to $v^X(p)$.  Set all edge weights to 0, except for the ``diagonal'' edges between consecutive vertices of a $K_4$, which have weight 1 as shown Figure~\ref{fig:ak4}.

Clearly, $g\in \Gamma$ can see $p\in X$ (in the $k$-periscope guarding model) if and only if there is a directed path from $v^\Gamma(g)$ to $v^X(p)$ in the constructed graph that uses at most $k$ diagonal edges, i.e., that has length at most $k$.  Thus the $k$-periscope guarding model reduces to bounded-reachability-cover.  

We may assume that $k\leq n$, since any guard can see the entire polygon in the $n$-periscope guard model.  Therefore the  parameters in the bounded-reachability-cover problem are polynomial in $n$. We can hence solve the $k$-periscope guarding problem in polynomial time in any polygon of bounded treewidth. Note that the run-time depends polynomially on $k$, so $k$ need not be a constant. We  conclude:
\begin{theorem}
$k$-periscope guarding is polynomial in polygons of bounded treewidth.
\end{theorem}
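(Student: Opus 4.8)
The plan is to reduce $k$-periscope guarding to bounded-reachability-cover on an auxiliary graph of bounded treewidth, paralleling the $s$-guarding argument of Section~\ref{subsec:sGuarding} but with edge-weights that count bends rather than tracking direction. First I would apply Lemma~\ref{lem:simplifyGeneral}: since $k$-periscope guarding uses point guards, is symmetric, and satisfies conditions (a) and (b) (verified by analogy with Claim~\ref{claim:interior2boundary}), any input $(\Gamma,X)$ may be replaced by an equivalent one whose guards and watch points are vertices of the 1-refinement. By Observation~\ref{obs:boundedTreewidthOfRefinement} the refinement still has bounded treewidth, so it suffices to solve $(\Gamma,X)$-$k$-periscope guarding for $\Gamma,X$ among the vertices of a fixed bounded-treewidth pixelation $\Psi$. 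By Lemma~\ref{lem:pathOnPixelation}, any witnessing path may be taken along edges of $\Psi$ without increasing its number of bends.

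The core of the argument is the weighted auxiliary graph $H_{\text{peri}}$. I would replace each pixel-vertex $u$ by a copy $K_4^u$ with ports $u_N,u_S,u_E,u_W$, join ports of adjacent gadgets along the shared grid direction (for instance $u_N$ to $v_S$ when $(u,v)$ is vertical with $u$ below $v$), and assign weight $1$ to the four ``diagonal'' port-pairs that mix a vertical port with a horizontal one, while the two ``straight'' edges $u_Nu_S,u_Eu_W$ and every inter-gadget edge get weight $0$. The point to establish is that a directed path passing through a gadget pays $1$ exactly when the underlying orthogonal path bends there and $0$ when it runs straight through, so that the total weight of a path from $v^\Gamma(g)$ to $v^X(p)$ equals the number of bends of the corresponding path along $\Psi$. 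Attaching $v^\Gamma(g)$ to every port of $g$, and every port of $p$ to $v^X(p)$, by weight-$0$ edges lets the path begin and end in any direction free of charge. Granting this correspondence, $g$ $k$-periscope-guards $p$ if and only if $H_{\text{peri}}$ has a directed path from $v^\Gamma(g)$ to $v^X(p)$ of weight at most $k$.

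It then remains to bound $tw(H_{\text{peri}})$ and to invoke the weighted machinery. Taking a tree decomposition of $\Psi$ and, in each bag, replacing every vertex $u$ by its four ports together with $v^\Gamma(u)$ and $v^X(u)$ when present yields --- just as for the graph $H^*$ of Section~\ref{subsec:sGuarding} --- a valid decomposition whose width grows by only a constant factor; hence $H_{\text{peri}}$ has bounded treewidth and $O(n)$ edges. The problem is now exactly $(H_{\text{peri}},V(\Gamma),V(X),k)$-bounded-reachability-cover. Since a single guard sees all of $P$ with at most $n$ bends we may assume $k\le n$, so the weight bound is polynomial, and the extended-monadic-second-order result of Arnborg et al.~\cite{ArnborgLS91} solves it in time polynomial in $n$ (and in $k$), giving the claim.

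The step I expect to be most delicate is pinning down the gadget correspondence so that the minimum weight of a path from $v^\Gamma(g)$ to $v^X(p)$ equals the minimum number of bends of an orthogonal $g$-to-$p$ path along $\Psi$. I must ensure that a path traverses exactly one internal $K_4$-edge each time it passes a vertex (so bends are neither double-counted nor missed), that boundary vertices whose gadgets have some ports deleted still route correctly, and that non-simple routings such as U-turns never undercount the true number of bends. A second, subtler point worth guarding against is that conditions (a),(b) of Lemma~\ref{lem:simplifyGeneral} are more delicate for $k$-periscope than for $s$-guards: because of the bend budget, an extra turn made inside a pixel is not free, so the verification that guarding one interior point forces guarding the whole pixel deserves the same care as in Claim~\ref{claim:interior2boundary}. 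Once both are settled, the remaining steps are routine reuses of the $s$-guarding construction.
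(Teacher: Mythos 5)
Your proposal is correct and follows essentially the same route as the paper: discretization via Lemma~\ref{lem:simplifyGeneral} and Lemma~\ref{lem:pathOnPixelation}, the same $K_4$ gadget with weight-$1$ diagonal (bend) edges and weight-$0$ straight and inter-gadget edges, the same source/sink attachments, and the same reduction to bounded-reachability-cover solved via Arnborg et al.~\cite{ArnborgLS91} under the harmless assumption $k\le n$. The subtleties you flag (one internal gadget edge per traversal, truncated boundary gadgets, and verifying conditions (a),(b) for the bend budget) are real but resolve exactly as in the paper's construction and Claim~\ref{claim:interior2boundary}.
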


\subsection{Sliding cameras}
It was already known that the sliding camera problem is polynomial in polygons of bounded treewidth~\cite{BiedlCLMMV17}.  However, using much the same auxiliary graph as in the previous subsection we can get a second (and in our opinion, easier) method of obtaining this result.

We solve the $(\Gamma,X)$-sliding camera guarding problem, for some set of sliding cameras $\Gamma$ (which are segments inside $P$) and watch points $X$. It was argued in \cite{BiedlCLMMV17} that we may assume $\Gamma$ to be a finite set of maximal segments that lie along the standard pixelation; in particular the endpoints of candidate guards are pixel-vertices. As for $X$, we cannot apply Lemma~\ref{lem:simplifyGeneral} directly, since sliding cameras are not point guards and hence not symmetric. But sliding cameras do satisfy conditions (a) and (b) of Lemma~\ref{lem:simplifyGeneral}. As one can easily verify by following the proof, we may therefore assume $X$ to consist of pixel-vertices of the 1-refinement. (A similar result was also argued in \cite{BiedlCLMMV17}.)  

We build an auxiliary graph $H_{\text{slide}}$ almost exactly as in the previous subsection. Thus, start with the graph of the 1-refinement of the standard pixelation. Replace every vertex by a $K_4$, weighted as before.  (All other edges receive weight 0.) For each $p\in X$, define a new vertex $v^X(p)$ and connect it as in the previous subsection, i.e., add edges from $p_N,p_E,p_W,p_S$ to $v^X(p)$. For any sliding camera $\gamma\in \Gamma$, add a new vertex $v^\Gamma(\gamma)$. The only new thing is how these vertices get connected. If $\gamma$ is horizontal, then add an edge from $v^\Gamma(\gamma)$ to $g_W$, where $g_W$ is the left endpoint of $\gamma$. If $\gamma$ is vertical, then add an edge from $v^\Gamma(\gamma)$ to $g_N$, where $g_N$ is the top endpoint of $\gamma$.
\begin{claim}
A sliding camera $\gamma\in \Gamma$ can see a point $p$ if and only if there
exists a directed path from $v^\Gamma(\gamma)$ to $v^X(p)$ with length at most 1.
\end{claim}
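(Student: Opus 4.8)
The plan is to show that a length-$\le 1$ directed path in $H_{\text{slide}}$ is exactly the graph-encoding of the L-shaped viewing path of a sliding camera. I would first record the key feature of the gadget: in each $K_4^u$ the two ``through'' edges ($u_Wu_E$ and $u_Nu_S$) have weight $0$ while the four diagonal edges have weight $1$, so a directed walk traverses a weight-$1$ edge precisely when it makes a $90^\circ$ turn; hence the total weight of a walk equals its number of bends, and ``length at most $1$'' means ``at most one bend''. Throughout I would assume, without loss of generality, that $\gamma$ is horizontal with left endpoint $g$; the vertical case and the reflected sub-cases are identical. The definition to match is: $\gamma$ sees $p$ iff the foot $q$ of the perpendicular from $p$ onto $\gamma$ lies on $\gamma$ and the vertical segment $\overline{qp}$ lies in $P$; equivalently, iff there is an orthogonal path from $g$ to $p$, inside $P$, consisting of an eastward run along $\gamma$ followed by a single vertical segment to $p$.

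For the forward direction I would start from the assumption that $\gamma$ sees $p$ and take the L-shaped path that runs east along $\gamma$ from $g$ to the foot $q$ and then vertically from $q$ to $p$. This path lies in $P$, has at most one bend, and is both $x$- and $y$-monotone. Since $g$ and $p$ are pixel-vertices, Lemma~\ref{lem:pathOnPixelation} supplies a path $\pi'$ along pixel-edges that is still monotone and has at most one bend, hence is again an east-then-vertical L-shape. I would then read $\pi'$ off inside $H_{\text{slide}}$: begin with the edge $v^\Gamma(\gamma)\to g_W$, traverse the horizontal part on through-edges and the weight-$0$ connectors between consecutive gadgets, spend the single permitted diagonal (weight-$1$) edge at the unique bend to turn vertical, follow the vertical part on through-edges, and finish with an edge $p_\beta\to v^X(p)$. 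The total weight equals the number of bends of $\pi'$, which is at most $1$.

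For the reverse direction I would take a directed path of weight at most $1$ from $v^\Gamma(\gamma)$ to $v^X(p)$. Its first edge is forced to enter $g_W$, and from $g_W$ the only weight-$0$ continuation heads east, because $g$ is the left endpoint of a \emph{maximal} segment and so the boundary of $P$ lies immediately to its west; any immediate vertical move would already consume the one allowed bend. Thus the path runs east along the horizontal line of $\gamma$ and then, using at most one diagonal edge, turns once onto a vertical run that ends at some $p_\beta$ before reaching $v^X(p)$. Maximality of $\gamma$ also prevents the eastward run from passing the right endpoint of $\gamma$, since the boundary blocks any further east-edge inside $P$; hence the bend occurs at a vertex $q$ lying on $\gamma$. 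As the final run is vertical, $q$ and $p$ share an $x$-coordinate, so $q$ is precisely the perpendicular foot, and the vertical run, being along pixel-edges, lies in $P$. Therefore the perpendicular from $p$ onto $\gamma$ lies in $P$, i.e.\ $\gamma$ sees $p$.

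The hard part will be the reverse direction: I must rule out any weight-$\le 1$ walk that reaches $p_\beta$ without tracing the intended camera geometry, for instance one that leaks westward off $\gamma$, overshoots the right endpoint, or turns more than once. The three facts that close these gaps are (i) weight equals bend-count from the $K_4$ weighting, so exactly one bend is available; (ii) the connection point $g_W$, together with the boundary lying to the west of the maximal segment, forces the initial direction to be east along $\gamma$; and (iii) maximality of $\gamma$ confines the horizontal run to $\gamma$ itself, so the bend vertex is the perpendicular foot. Pinning down these three points carefully—especially the interaction between the boundary vertices omitted from the gadgets and the forced eastward start at $g_W$—is where the real work lies; the remaining bookkeeping is routine.
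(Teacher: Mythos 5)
Your proof is correct and takes essentially the same route as the paper's: the forward direction encodes the east-then-vertical L-path starting at $g_W$ through the $K_4$ gadgets (weight $1$ if the foot differs from $p$, else $0$), and the reverse direction uses the length bound to allow at most one diagonal edge and invokes maximality of $\gamma$ to confine the horizontal prefix to $\gamma$, so the turning vertex is the perpendicular foot and the vertical run realizes the segment $\overline{cp}$ inside $P$. The extra care you take about westward leaks and overshooting the right endpoint is just an expanded version of the paper's one-line appeal to $\gamma$ being a maximal segment.
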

\begin{proof}
We assume that $\gamma$ is horizontal, the other case is symmetric.  If $\gamma$ can see $p$, then there exists some point $c\in \gamma$ such that the vertical line segment $\overline{cp}$ is inside $P$.  Let $g_W$ be the left endpoint of $\gamma$.  Then the orthogonal path $\pi$ from $g_W$ to $c$ to $p$ lies inside $P$.  Using the corresponding path from $g_W$ to $c$ to $p_S$ or $p_N$ in $H$, we obtain the desired directed path which has length 1 if $c\neq p$ and length 0 otherwise.  

Vice versa, if there exists a directed path from $v^\Gamma(\gamma)$ to $v^X(p)$ with length at most 1 then there exists a path $\pi$ from $g_W$ to $p_S$ or $p_N$ that uses 
at most one diagonal edge.  The part of $\pi$ before this diagonal edge corresponds to horizontal edges along the line through $g_W$; all such horizontal edges belong to $\gamma$ since $\gamma$ is a maximal segment. Thus the diagonal edge belongs to a point $c\in \gamma$.  The part of $\pi$ after this diagonal edge corresponds to vertical edges along the
line through $p$, and hence (after possible shortening of the path) goes along the line segment $\overline{cp}$, which hence must be inside $P$. So, $\gamma$ can see $p$.
\end{proof}

Therefore the sliding camera problem reduces to a bounded-reachability-cover problem where all weights are at most 1; this can be solved
in polynomial (in fact, linear) time if the polygon has bounded treewidth. We conclude:
\begin{theorem}[see also \cite{BiedlCLMMV17}]
Sliding camera guarding is polynomial in polygons of bounded treewidth.
\end{theorem}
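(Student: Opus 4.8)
The plan is to assemble the pieces already in place into a clean reduction and then invoke the bounded-treewidth machinery. First I would fix the reduced instance: by the result of \cite{BiedlCLMMV17} we may take $\Gamma$ to be the finite set of maximal axis-aligned segments of the standard pixelation, so that each candidate camera has pixel-vertices as its endpoints; and by following the proof of Lemma~\ref{lem:simplifyGeneral} — using only conditions (a) and (b), which sliding cameras satisfy, and not symmetry — we may take $X$ to consist of vertices of the 1-refinement. With these finite sets in hand, the preceding Claim establishes that a camera $\gamma\in\Gamma$ sees a point $p\in X$ if and only if $H_{\text{slide}}$ contains a directed path from $v^\Gamma(\gamma)$ to $v^X(p)$ of length at most $1$. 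Hence $(\Gamma,X)$-sliding-camera guarding is exactly $(H_{\text{slide}},V(\Gamma),V(X),1)$-bounded-reachability-cover, where $V(\Gamma)=\{v^\Gamma(\gamma)\}$ and $V(X)=\{v^X(p)\}$.

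The part needing the most care is bounding the treewidth of $H_{\text{slide}}$. I would start from a tree decomposition of the 1-refinement, which has width $O(t)$ by Observation~\ref{obs:boundedTreewidthOfRefinement} since $P$ (and hence its standard pixelation) has bounded treewidth $t$. As in the construction of $H^*$ in Section~\ref{subsec:sGuarding}, replace each pixel-vertex $u$ occurring in a bag by the four vertices $u_N,u_S,u_E,u_W$ of its gadget $K_4^u$; this multiplies each bag size by at most $4$ and still covers every gadget edge and every edge joining adjacent gadgets. It then remains to place the new source and sink vertices. Each $v^X(p)$ is adjacent only to vertices of the single gadget $K_4^p$, so it can be added to any bag containing that gadget. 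Each $v^\Gamma(\gamma)$ is adjacent to a single gadget-vertex ($g_W$ or $g_N$) at a designated endpoint of $\gamma$; since a fixed pixel-vertex is the left endpoint of at most one maximal horizontal segment and the top endpoint of at most one maximal vertical segment, only $O(1)$ camera-vertices attach to any one gadget, so these too can be absorbed into the corresponding bags. This yields a valid tree decomposition of $H_{\text{slide}}$ of width $O(t)$, and since the 1-refinement has $O(n)$ vertices and edges for bounded $t$, so does $H_{\text{slide}}$.

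Finally I would solve the bounded-reachability-cover instance. Reachability-cover is expressible in MSOL, and bounded-reachability-cover is an \emph{extended} monadic second-order problem in the sense of \cite{ArnborgLS91}, which — in its minimization form, just as Courcelle's theorem extends to minimization — is solvable on graphs of bounded treewidth in time polynomial in the graph size and the maximum weight. Here the only edge weights are $0$ and $1$ and the length-bound is $W=1$, so the maximum weight is constant; hence the problem is solvable in polynomial, indeed linear, time on $H_{\text{slide}}$. Combining this with the equivalence from the Claim gives a polynomial-time algorithm for sliding-camera guarding on any polygon of bounded treewidth. The two points I would double-check are the minimization extension of the extended-MSOL framework and that the finiteness reductions for $\Gamma$ and $X$ genuinely preserve the optimum; both are routine given the cited results, which is why I expect the treewidth bound above, rather than these steps, to be the crux.
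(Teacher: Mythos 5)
Your proposal is correct and follows essentially the same route as the paper: the same reduction of $\Gamma$ and $X$ via \cite{BiedlCLMMV17} and conditions (a)--(b) of Lemma~\ref{lem:simplifyGeneral}, the same auxiliary graph $H_{\text{slide}}$ with the Claim reducing the problem to bounded-reachability-cover with weight bound $1$, solved via the extended-MSOL framework of \cite{ArnborgLS91}. The only difference is that you spell out the $O(t)$ tree-decomposition of $H_{\text{slide}}$ (gadget replacement plus absorbing the $O(1)$ source/sink vertices per pixel-vertex), a detail the paper leaves implicit by analogy with the construction of $H^*$ --- and your argument for it is sound.
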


\section{Conclusion}
\label{sec:conclusion}
In this paper, we gave algorithms for guarding orthogonal polygons of bounded treewidth. We considered various models of orthogonal guarding, and solved the guarding problem on such polygons for $s$-guards, $k$-periscope guards, and sliding cameras, and some other related guarding types.

As for open problems, the main question is whether these results could be used to obtain better approximation algorithms. Baker's method~\cite{Baker94} yields a PTAS for many problems in planar graphs by splitting the graph into graphs of bounded treewidth and combining solutions suitably.  However, this requires the problems to be ``local'' in some sense, and the guarding problems considered here are not local in that a guard may see a point whose distance in the graph of the pixelation is very far, which seems to make Baker's approach infeasible. Are these guarding problems APX-hard in polygons with holes?

\bibliographystyle{plain}
\bibliography{ref}

\begin{thebibliography}{10}

\bibitem{ArnborgLS91}
S.~Arnborg, J.~Lagergren, and D.~Seese.
\newblock Easy problems for tree-decomposable graphs.
\newblock {\em J. Algorithms}, 12(2):308--340, 1991.

\bibitem{Baker94}
B.~Baker.
\newblock Approximation algorithms for {NP}-complete problems on planar graphs.
\newblock {\em J. ACM}, 41(1):153--180, 1994.

\bibitem{BiedlCLMMV17}
T.~Biedl, T.~M. Chan, S.~Lee, S.~Mehrabi, F.~Montecchiani, and H.~Vosoughpour.
\newblock On guarding orthogonal polygons with sliding cameras.
\newblock In {\em International Conference and Workshops on Algorithms and
  Computation (WALCOM 2017)}, volume 10167 of {\em LNCS}, pages 54--65.
  Springer, 2017.

\bibitem{Biedl016}
T.~Biedl and S.~Mehrabi.
\newblock On r-guarding thin orthogonal polygons.
\newblock In {\em International Symposium on Algorithms and Computation (ISAAC
  2016)}, volume~64 of {\em LIPIcs}, pages 17:1--17:13, 2016.

\bibitem{Bod08}
H.~Bodlaender and A.~Koster.
\newblock Combinatorial optimization on graphs of bounded treewidth.
\newblock {\em The Computer Journal}, 51(3):255--269, 2008.

\bibitem{chvatal1975}
V.~Chv\'{a}tal.
\newblock A combinatorial theorem in plane geometry.
\newblock {\em Journal of Combinatorial Theory, Series B}, 18:39--41, 1975.

\bibitem{Courcelle90}
B.~Courcelle.
\newblock The monadic second-order logic of graphs. {I}. {R}ecognizable sets of
  finite graphs.
\newblock {\em Inf. Comput.}, 85(1):12--75, 1990.

\bibitem{Courcelle97}
B.~Courcelle.
\newblock On the expression of graph properties in some fragments of monadic
  second-order logic.
\newblock In {\em DIAMCS Workshop on Descriptive Complexity and Finite Models},
  pages 33--57. American Mathematical Society, 1997.

\bibitem{CulbersonReckhow1989}
J.~Culberson and R.~A. Reckhow.
\newblock Orthogonally convex coverings of orthogonal polygons without holes.
\newblock {\em Journal of Computer and System Sciences}, 39(2):166 -- 204,
  1989.

\bibitem{durocherM2013}
S.~Durocher and S.~Mehrabi.
\newblock Guarding orthogonal art galleries using sliding cameras: algorithmic
  and hardness results.
\newblock In {\em Mathematical Foundations of Computer Science (MFCS 2013)},
  volume 8087 of {\em LNCS}, pages 314--324, 2013.

\bibitem{eidenbenz2001}
S.~Eidenbenz, C.~Stamm, and P.~Widmayer.
\newblock Inapproximability results for guarding polygons and terrains.
\newblock {\em Algorithmica}, 31(1):79--113, 2001.

\bibitem{GewaliN92}
L.~Gewali and S.~C. Ntafos.
\newblock Covering grids and orthogonal polygons with periscope guards.
\newblock {\em Comput. Geom.}, 2:309--334, 1992.

\bibitem{ghosh2010}
S.~K. Ghosh.
\newblock Approximation algorithms for art gallery problems in polygons.
\newblock {\em Disc.\ App.\ Math.}, 158(6):718--722, 2010.

\bibitem{katz2011}
M.~J. Katz and G.~Morgenstern.
\newblock Guarding orthogonal art galleries with sliding cameras.
\newblock {\em International J. Computational Geometry \& Applications},
  21(2):241--250, 2011.

\bibitem{keil1986}
J.~M. Keil.
\newblock Minimally covering a horizontally convex orthogonal polygon.
\newblock In {\em {ACM} Symposium on Computational Geometry (SoCG 1986)}, pages
  43--51, 1986.

\bibitem{krohn2013}
E.~Krohn and B.~J. Nilsson.
\newblock Approximate guarding of monotone and rectilinear polygons.
\newblock {\em Algorithmica}, 66(3):564--594, 2013.

\bibitem{lee1986}
D.~T. Lee and A.~K. Lin.
\newblock Computational complexity of art gallery problems.
\newblock {\em IEEE Transactions on Information Theory}, 32(2):276--282, 1986.

\bibitem{Lingas2012}
A.~Lingas, A.~Wasylewicz, and P.~Zylinski.
\newblock Linear-time 3-approximation algorithm for the r-star covering
  problem.
\newblock {\em International J. Computational Geometry \& applications},
  22(2):103--142, 2012.

\bibitem{SaeedThesis}
Saeed Mehrabi.
\newblock {\em Geometric optimization problems on orthogonal polygons: hardness
  results and approximation algorithms}.
\newblock PhD thesis, University of Manitoba, Winnipeg, Canada, August 2015.

\bibitem{MotwaniRS89}
R.~Motwani, A.~Raghunathan, and H.~Saran.
\newblock Covering orthogonal polygons with star polygons: The perfect graph
  approach.
\newblock In {\em ACM Symposium on Computational Geometry (SoCG 1988)}, pages
  211--223, 1988.

\bibitem{MotwaniRS90}
R.~Motwani, A.~Raghunathan, and H.~Saran.
\newblock Covering orthogonal polygons with star polygons: The perfect graph
  approach.
\newblock {\em J. Comput. Syst. Sci.}, 40(1):19--48, 1990.

\bibitem{orourke1987}
J.~O'Rourke.
\newblock {\em Art Gallery Theorems and Algorithms}.
\newblock Oxford University Press, New York, NY, 1987.

\bibitem{PaliosTzimas2014}
L.~Palios and P.~Tzimas.
\newblock Minimum r-star cover of class-3 orthogonal polygons.
\newblock In {\em International Workshop on Combinatorial Algorithms (IWOCA
  2014)}, volume 8986 of {\em LNCS}, pages 286--297. Springer, 2014.

\bibitem{dietmar1995}
D.~Schuchardt and H.-D. Hecker.
\newblock Two \textsc{NP}-hard art-gallery problems for ortho-polygons.
\newblock {\em Mathematical Logic Quarterly}, 41(2):261--267, 1995.

\bibitem{Shermer91}
T.~Shermer.
\newblock Covering and guarding polygons using {$L_k$}-sets.
\newblock {\em Geometriae Dedicata}, 37:183--203, 1991.

\bibitem{tomas2013}
A.~P. Tom{\'{a}}s.
\newblock Guarding thin orthogonal polygons is hard.
\newblock In {\em Fundamentals of Computation Theory (FCT 2013)}, volume 8070
  of {\em LNCS}, pages 305--316, 2013.

\bibitem{worman2007}
C.~Worman and J.~M. Keil.
\newblock Polygon decomposition and the orthogonal art gallery problem.
\newblock {\em Int. J. Comput. Geometry Appl.}, 17(2):105--138, 2007.

\end{thebibliography}

\end{document}